\documentclass{article}
\usepackage{graphicx} 
\usepackage{amsfonts, amsthm, amsmath, amssymb, xcolor, nicematrix}
\usepackage{comment}
\usepackage{algorithm}
\usepackage{algpseudocode}
\usepackage{hyperref}
\usepackage{tikz}
\usetikzlibrary{decorations.pathreplacing}
\usepackage[
    commandnameprefix=ifneeded,
    todonotes={colorinlistoftodos,
                prependcaption,
                textsize=small,
                backgroundcolor=orange!10,
                textcolor=black,
                linecolor=orange,
                bordercolor=orange}
]{changes}

\definechangesauthor[name={GA}, color=teal]{GA}
\definechangesauthor[name={R2}, color=red]{R2}
\definechangesauthor[name={GDC}, color=magenta]{GDC}
\definechangesauthor[name={PB}, color=blue]{PB}
\newcommand{\addedg}[1]{\added[id=GA]{#1}}

\usepackage[braket]{qcircuit}
\usepackage{environ}
\newcommand{\myqctmp}[2][0.25]{\Qcircuit @C=#2em @R=#1em @!R}
\NewEnviron{myqcircuit}[1][0.25]{\vcenter{\myqctmp[#1]{0.5} {\BODY}}}
\usepackage{subcaption}
\newtheorem{theorem}{Theorem}
\newtheorem{lemma}[theorem]{Lemma}

\newtheorem{remark}[theorem]{Remark}
\newtheorem{proposition}[theorem]{Proposition}
\newtheorem{definition}[theorem]{Definition}

\providecommand{\keywords}[1]
{
  \small	
  \textbf{\textit{Keywords---}} #1
}
\title{Quantum block encoding for one-pair semiseparable matrices}
\author{G. Antonioli\thanks{Dipartimento di Informatica, Universit\`a di Pisa, Largo Bruno Pontecorvo 3, 56127 Pisa, Italia. Member of the INdAM Research Group GNCS. Email: {\tt giacomo.antonioli@phd.unipi.it} Orcid: {\tt 0009-0000-6687-0357} },
P. Boito\thanks{Dipartimento di Matematica, Universit\`a di Pisa, Largo Bruno Pontecorvo 5, 56127 Pisa, Italia. Member of the INdAM Research Group GNCS. Email: {\tt paola.boito@unipi.it} Orcid: {\tt 0000-0002-3559-393X}},
G. M. Del Corso\thanks{Dipartimento di Informatica, Universit\`a di Pisa, Largo Bruno Pontecorvo 3, 56127 Pisa, Italia. Member of the INdAM Research Group GNCS. 
Email: {\tt gianna.delcorso@unipi.it} Orcid: {\tt 0000-0002-5651-9368}},
M. Porcelli\thanks{Dipartimento di Ingegneria Industriale, Universit\`a degli Studi di Firenze, Viale Morgagni 40/44, 50134 Firenze, Italia. Member of the INdAM Research Group GNCS. Email: {\tt margherita.porcelli@unifi.it} Orcid: {\tt 0000-0003-0183-1204}}\ \thanks{ISTI--CNR, Via Moruzzi 1, Pisa, Italia} }
\date{June 2026}

\begin{document}

\maketitle

\begin{abstract}
Quantum block encoding (QBE) is a crucial step in the development of most quantum algorithms, as it provides an embedding of a given matrix into a suitable larger unitary matrix. Historically, the development of efficient techniques for QBE has mostly focused on sparse matrices; less effort has been devoted to data-sparse (e.g., rank-structured) matrices. 

In this work we examine a particular case of rank structure, namely, one-pair semiseparable matrices. We present a new block encoding approach that relies on a suitable factorization of the given matrix as the product of triangular and diagonal factors. To encode the matrix, the algorithm needs $2\log(N)+7$ ancillary qubits. 

Assuming that the data input oracles can be implemented with polylogarithmic depth, or that a QRAM input model is available, our proposed method requires $\mathcal{O}({\rm polylog} (N))$ time and has an error of $\mathcal{O}(N^2)$, where $N$ is the matrix size.

\end{abstract}
\keywords{Quantum Block Encoding, One-pair Semiseparable matrices,  Quantum Numerical Linear Algebra, Structured matrices}
\section{Introduction}

Block encoding plays a fundamental role in the design of quantum algorithms, particularly when quantum numerical linear algebra is involved. Since operators acting on a quantum system -- such as gates in a quantum circuit -- are necessarily unitary, quantum algorithms cannot work directly on arbitrary matrices, but are instead restricted to unitary matrices. Roughly speaking, block encoding consists in a suitable embedding of a given matrix $A$ in a larger unitary matrix $U$ that can be expressed as the product of a small number of simple unitary transformations, and such that $A$ is found in the top left block of $U$. One can then perform computations involving $A$ via a quantum circuit that encodes $U$.

In many applications, $A$ is sparse or otherwise structured. Exploiting this structure is crucial in the development of effective block encoding strategies. For instance, the well-known HHL algorithm \cite{harrow2009quantum} for the solution of a linear system $Ax=b$ reaches an exponential speedup with respect to classical conjugate gradient only if $A$ is sparse and efficiently encoded; otherwise the speedup becomes quadratic \cite{wossnig2018quantum}.

In this context it is interesting to explore block encoding techniques for (possibly dense) matrices endowed with some kind of rank structure. Rank structures often stem from applications and reflect fundamental properties of the problem under study, such as localization or shift invariance. They usually fall into one of two main classes: displacement structure (i.e., the output of a certain linear operator applied to $A$ is a low-rank matrix) and off-diagonal rank structure (i.e., off-diagonal blocks of $A$ have low rank; this condition is at the basis of several classes of structured matrices such as semiseparable, quasiseparable, or hierarchical semiseparable). 

Block encoding techniques for displacement structured matrices are proposed in \cite{wan2021block}. In this work the authors address at first the case of Toeplitz matrices and then apply the LCU (linear combination of unitaries) method to block encode a subclass of Toeplitz-like matrices; similar considerations apply of course to Hankel and Hankel-like matrices.  Sparsity structure is exploited for efficient block encoding e.g. in \cite{camps2024explicit}.

To the best of our knowledge, no effort has been made so far towards the block encoding of semiseparable and related matrix structures. In this work, we focus on one-pair semiseparable matrices and develop an efficient block encoding algorithm. The method relies on a suitable factorization of the matrix into components that are easily block encoded, and by combining these, we can efficiently obtain the desired structured matrix in $\mathcal{O}({\rm polylog} (N))$ time, plus two calls to an input oracle. If the oracle can be implemented with polylogarithmic depth, or a QRAM input model is available, our proposed method requires $\mathcal{O}({\rm polylog} (N))$ time and has an error of $\mathcal{O}(N^2)$, where $N$ is the matrix size.

For comparison purposes, we consider the FABLE package \cite{fable} for general matrix block encoding. Note however that in FABLE the gate complexity of matrix block encoding scales up to $\mathcal{O}(N^2)$, where $N$ is the matrix size, reflecting the cost of encoding all independent elements.


While FABLE's compression can exploit implicit structure, its maximum $\mathcal{O}(N^2)$ complexity is not optimal for rank-structured matrices. An $N\times N$ one-pair matrix, for instance, is completely defined by its $2N$ generators, and basic operations such as matrix multiplication or inversion have an $\mathcal{O}(N)$ complexity in a classical setting. 
Even in a worst-case scenario where our input oracle is implemented as in FABLE, the required loading time scales like $\mathcal{O}(N)$ and the overall cost remains linear in $N$, instead of quadratic. We validate our approach by comparing it to FABLE, proving that our factorization technique successfully delivers a block encoding for one-pair semiseparable matrices with demonstrably reduced complexity and error compared to the general framework.



Sections ~\ref{sec:Onepairdef} and ~\ref{sec:qubit} are devoted to introductory material. Definitions and notation for one-pair semiseparable matrices are given in Section ~\ref{sec:Onepairdef}. We recall in~\ref{sec:qubit} the main quantum computing concepts used in the paper, with an
emphasis on the aspects relevant to numerical linear algebra; we follow the
standard notation of~\cite{lin2022lecture}. Section ~\ref{sec:QBE} briefly reviews the main features of unstructured block encoding in the FABLE package, and then focuses on block encoding of diagonal matrices. The overall algorithm for one-pair matrices and its building blocks are analyzed in Section~\ref{sec:building_blocks}, whereas Section ~\ref{Sec:NE} presents numerical experiments. 

The block encoding of diagonal matrices (Section~\ref{sec:diag}) adapts results from~\cite{li2023efficient,lin2022lecture}; the encoding proposed in Section~\ref{sec:QBEL} builds on~\cite{camps2024explicit}, where the circuit for the shift operator is constructed. The main original contributions of this work are the 
overall encoding strategy for one-pair semiseparable matrices and the complexity results presented in Theorem~\ref{lemma:Dz}, Theorem~\ref{teo:FE}.

\section{One-pair semiseparable matrices}
\label{sec:Onepairdef}
The main goal of the present work is the design and implementation of a block encoding strategy for one-pair semiseparable matrices. One-pair matrices were originally introduced in \cite{gantmacher2002oscillation} in the context of operator theory; other names used in the literature include single-pair and generator-representable semiseparable matrices \cite{vandebril2005note}. 

One-pair matrix structure is a particular case of off-diagonal low-rank structure. Variants of this class of matrix structure -- such as semiseparable, quasiseparable, HODLR, hierarchical -- arise often in applications and have been widely exploited to design fast algorithms for solving linear systems, computing eigenvalues and performing many other crucial tasks in numerical linear algebra \cite{vandebril2008matrix, eidelman2014separable, bebendorf2008hierarchical, ChanGu06, massei2020hm}.  

\begin{definition}
    Given vectors $u=[u_i]_{i=0,\ldots,N-1}$, $v=[v_j]_{j=0,\ldots,N-1}$ of dimension $N$, the one-pair matrix associated with $u, v$ is a symmetric $N\times N$ matrix $S = S(u, v)$ such that
    $$
    S_{i,j}=\left\{\begin{array}{cc}
    u_i v_j     & {\rm if } \, i\leq j  \\
    u_j v_i     & {\rm if } \, i\geq j
    \end{array}\right. .
    $$  
    
\end{definition}
 
In other words, we have
$$
S = S(u, v) = \left[ \begin{array}{ccccc}
u_0 v_0 & u_0 v_1 & u_0 v_2 & \cdots & u_0 v_{N-1} \\
u_0 v_1 & u_1 v_1 & u_1 v_2 & \cdots & u_1 v_{N-1} \\
u_0 v_2 & u_1 v_2 & u_2 v_2 & \cdots & u_2 v_{N-1} \\
\vdots  & \vdots  & \vdots  & \ddots & \vdots \\
u_0 v_{N-1} & u_1 v_{N-1} & u_2 v_{N-1} & \cdots & u_{N-1} v_{N-1}
\end{array} \right].
$$
A well-known property of one-pair matrices is their connection to Jacobi matrices, as stated in the following result \cite{gantmacher2002oscillation, vandebril2005note}.
\begin{theorem}
    Let $A$ be an $N\times N$ symmetric tridiagonal matrix of the form
    $$
    A=\begin{bmatrix}
        a_0 & b_0 \\ b_0 & a_1 & b_1\\ & \ddots & \ddots & \ddots \\
        && b_{N-3} & a_{N-2}& b_{N-2}\\ &&& b_{N-2} & a_{N-1}
    \end{bmatrix}
    $$
    with $b_i\neq 0$ for $i=0,\ldots,N-2$. Then the inverse of $A$ is a one-pair matrix.

    Conversely, the inverse of an invertible one-pair matrix is an irreducible symmetric tridiagonal matrix.
\end{theorem}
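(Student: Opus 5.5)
For the first part, I would build the inverse explicitly from two solutions of the homogeneous three-term recurrence attached to $A$, in the spirit of the Green's function for a Sturm--Liouville problem. Since all $b_i\neq 0$, I define $u=[u_i]$ by the forward recurrence
\begin{equation*}
u_0=1,\qquad a_0u_0+b_0u_1=0,\qquad b_{i-1}u_{i-1}+a_iu_i+b_iu_{i+1}=0 \quad (1\le i\le N-2).
\end{equation*}
Similarly, $w=[w_i]$ is defined by the backward recurrence starting from $w_{N-1}=1$ with the last-row equation. By construction, $(Au)_i=0$ for $i<N-1$, and $(Aw)_i=0$ for $i>0$. Implicitly, I assume $A$ is invertible, since otherwise ``the inverse'' makes no sense.

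The key quantity is the discrete Wronskian $W_i=b_i(u_iw_{i+1}-u_{i+1}w_i)$. Subtracting $w_i$ times the $u$-equation from $u_i$ times the $w$-equation at an interior row $i$ shows $W_i=W_{i-1}$, so $W$ is a constant $c$. Evaluating at the end gives $c=\pm u$-residual, i.e.\ $c=-(Au)_{N-1}$ up to sign. This is nonzero: otherwise $Au=0$ with $u\neq 0$, contradicting invertibility.

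I then set $v_j=w_j/c$ (up to the sign fixed above) and verify that $S=S(u,v)$ satisfies $AS=I$. Write $S_{i,j}=u_{\min(i,j)}v_{\max(i,j)}$ and check column $j$ row by row.
\begin{itemize}
\item For $i<j$, the entries involved in row $i$ of $AS$ are $u_{i-1}v_j,u_iv_j,u_{i+1}v_j$, so $(AS)_{ij}=v_j(Au)_i=0$.
\item For $i>j$, the entries are multiples of $u_j$ times $w$-values, and $(Aw)_i=0$ gives zero.
\item For $i=j$, the row mixes $u_{i-1}v_i$, $u_iv_i$ and $u_iv_{i+1}$. Using the $u$-equation to eliminate $a_iu_i$, the result reduces to $b_i(u_iv_{i+1}-u_{i+1}v_i)=W_i/c=1$.
\end{itemize}
The boundary rows $i=0$ and $i=N-1$ are handled by the same computation with the missing terms absent. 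Hence $A^{-1}=S(u,v)$ is a one-pair matrix.

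For the converse, let $S=S(u,v)$ be invertible. I would first show $u_0\neq 0$ and $v_{N-1}\neq 0$, since otherwise the first row (respectively the last column) vanishes. Next I would show that for each $i$ the $2\times2$ quantity $u_iv_{i+1}-u_{i+1}v_i\neq 0$. If it were zero, rows $i$ and $i+1$ would be proportional in the columns $j\ge i+1$. Combined with the rank-one structure of the lower-left block $S(i{+}1{:}N{-}1,0{:}i)=v_{i+1:}u_{0:i}^T$, I would try to exhibit a nontrivial kernel vector, or equivalently show that a $2\times2$ minor argument forces singularity.

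Then $T=S^{-1}$ is obtained as follows. For $|i-j|\ge2$, $T_{ij}$ is, up to sign, a minor of $S$ over $\det S$. Deleting row $j$ and column $i$ leaves a submatrix containing two rows that are proportional on the relevant off-diagonal rank-one structure, so the minor vanishes; this is the standard ``inverse of a rank-one-off-diagonal matrix is banded'' (Gantmacher--Krein) argument. Symmetry of $T$ is inherited from that of $S$. The subdiagonal entries $T_{i+1,i}$ are proportional to $u_iv_{i+1}-u_{i+1}v_i$ divided by products of neighbouring such quantities, which I would confirm by matching with the explicit formula from the first part: given $T$ tridiagonal, the first part says $T^{-1}$ is built from these Wronskian data. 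Since these quantities are nonzero, $T$ is irreducible.

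The main obstacle is the converse, specifically the minor-vanishing argument and the nonvanishing of $u_iv_{i+1}-u_{i+1}v_i$. I would attempt it first by writing $S$ as $L+U$ rank-one-structured, and using the nested minors directly rather than via the kernel argument.
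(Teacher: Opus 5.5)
The paper does not prove this statement; it quotes it from Gantmacher--Krein and Vandebril--Van Barel--Mastronardi. Your proposal therefore has to stand on its own.

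\textbf{First half.} This part is correct. The forward/backward solutions $u,w$, the constant discrete Wronskian $c$ (which equals $(Au)_{N-1}$ and is nonzero by invertibility), and the entrywise check of $A\,S(u,w/c)=I$ all work. There is one slip, at the diagonal entry of row $N-1$: the $u$-recurrence does not hold there, so use $(AS)_{N-1,N-1}=v_{N-1}(Au)_{N-1}=1$ instead. The converse, however, has two genuine gaps.

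\textbf{Irreducibility is circular.} You want to identify $T_{i+1,i}$ ``by matching with the explicit formula from the first part''. But the first part presupposes nonzero off-diagonal entries, which is exactly what has to be proved. The guessed shape is also off: the true value is simply $T_{i,i+1}=1/(u_iv_{i+1}-u_{i+1}v_i)$, with no neighbouring factors, and it can never vanish.

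The repair is to run your diagonal computation backwards, which needs no hypothesis on the $b_i$. Suppose $T=S^{-1}$ is tridiagonal with off-diagonal entries $b_i$.
\begin{itemize}
\item Column $N-1$ of $TS=I$ reads $v_{N-1}(Tu)_i=0$ for $i\le N-2$, so $(Tu)_i=0$ because $v_{N-1}\neq0$.
\item The $(i,i)$ entry then gives $b_i(u_iv_{i+1}-u_{i+1}v_i)=1$.
\end{itemize}
Hence every $b_i\neq0$, and your step 2 (left at ``I would try'') is not even needed. Alternatively, $b_i=0$ would make $T$, hence $S$, block diagonal, contradicting $S_{0,N-1}=u_0v_{N-1}\neq0$.

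Incidentally, ``rows $i,i+1$ are proportional in the columns $j\ge i+1$'' holds for every one-pair matrix, so it carries no information. The useful fact for step 2 is that these two rows equal $\left[\begin{smallmatrix} v_i & u_i\\ v_{i+1} & u_{i+1}\end{smallmatrix}\right]$ times a $2\times N$ matrix.

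\textbf{The tridiagonality mechanism is wrong as stated.} Deleting row $c$ and column $r\ge c+2$ does not in general leave two proportional rows. Rows that are proportional only on part of the columns do not by themselves force a zero determinant. For example, take $S_{ij}=\min(i,j)+1$, i.e.\ $u=(1,2,3,4,5)$, $v=(1,1,1,1,1)$, which is an invertible one-pair matrix. Deleting row $1$ and column $3$ leaves the rows $(1,1,1,1)$, $(1,2,3,3)$, $(1,2,3,4)$, $(1,2,3,5)$, no two of which are proportional.

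What works is a rank count. The remaining matrix of order $N-1$ still contains the whole block $S(c{+}1{:}N{-}1,\,0{:}c{+}1)=v_{c+1:N-1}\,u_{0:c+1}^T$. This block has size $(N-c-1)\times(c+2)$ and rank at most one. So the remaining matrix has rank at most $1+c+(N-c-3)=N-2$, and the cofactor vanishes (this is the nullity theorem in disguise). With these two repairs the converse is complete.

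\textbf{A tidier route.} You can mirror your first half and avoid minors entirely. Once $u_0v_{N-1}\neq0$ and $W_i:=u_iv_{i+1}-u_{i+1}v_i\neq0$ are known, set $b_i=1/W_i$. Choose $a_i$ so that $(Tu)_i=0$ for $i\le N-2$ and $(Tv)_i=0$ for $i\ge1$. This is solvable at interior $i$ because $u_i(b_{i-1}v_{i-1}+b_iv_{i+1})-v_i(b_{i-1}u_{i-1}+b_iu_{i+1})=b_iW_i-b_{i-1}W_{i-1}=0$ and $(u_i,v_i)\neq0$. Your part-one computation then gives $TS=I$ verbatim.
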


In order to implement the block encoding of one-pair semiseparable matrices, we use the following factorization (see, e.g., the survey \cite{fasino}, or the proof of Theorem 1 in \cite{fasino2002structural}, and references therein).  
\begin{proposition} \label{prop:decomoposition}
    Given vectors $u, v$ with $u_i\neq 0$ for $i=0,\ldots, N-1$, define $D_u={\rm diag}(u_0, u_1, \ldots, u_{N-1})$ and let $z_i=v_i/u_i$, $i=0\ldots,{N-1}$. Then it holds
\begin{equation}\label{eq:mainfact}
S(u, v)=D_u L \Delta_z L^T D_u,
\end{equation}
where $\Delta_z={\rm diag}(z_0, z_1-z_0, \ldots, z_{N-1}-z_{N-2})$ and
$L= \begin{bmatrix}
1 &   &  \\
\vdots & \ddots &  \\
1 & \cdots & 1
\end{bmatrix}$.
\end{proposition}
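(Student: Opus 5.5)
The plan is a direct entrywise computation of $L\Delta_z L^T$, followed by the diagonal scaling by $D_u$.

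First, note that $S(u,v)$ can be written as $D_u M D_u$, where $M_{i,j} = z_{\max(i,j)}$. Indeed, for $i\le j$ we have $u_i v_j = u_i u_j (v_j/u_j) = u_i u_j z_j$, and symmetrically for $i\ge j$. This step uses the hypothesis $u_i\neq 0$, which makes each $z_j$ well defined. Since $(D_u X D_u)_{i,j}=u_i X_{i,j} u_j$, it then suffices to prove that
$$L\Delta_z L^T = M, \qquad M_{i,j}=z_{\max(i,j)}.$$

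Second, compute $L\Delta_z L^T$ entrywise. Write $\Delta_z=\mathrm{diag}(\delta_0,\ldots,\delta_{N-1})$ with $\delta_0=z_0$ and $\delta_k=z_k-z_{k-1}$ for $k\ge 1$. Since $L_{i,k}=1$ exactly when $k\le i$,
$$(L\Delta_z L^T)_{i,j}=\sum_{k} L_{i,k}\,\delta_k\, L_{j,k} = \sum_{k=0}^{\min(i,j)} \delta_k .$$
This is a telescoping sum equal to $z_{\min(i,j)}$.

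The main obstacle is exactly here. The telescoping gives $z_{\min(i,j)}$, whereas the first step requires $z_{\max(i,j)}$. So I must check the indexing convention against the displayed matrix. The displayed $S$ has entry $(i,j)$, $i\le j$, equal to $u_i v_j$: the first row is $u_0v_0,\,u_0v_1,\ldots$, so $u$ carries the smaller index. With the factorization as stated, $D_uL\Delta_zL^TD_u$ yields $u_iu_jz_{\min(i,j)}=u_{\max}v_{\min}$, which puts $u$ on the larger index.

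I would resolve this in one of two ways. The first is to reverse the roles by setting $\tilde z_i = v_i/u_i$ and using the upper-triangular form: $L^T\Delta L$ sums over $k\ge\max(i,j)$, and with $\Delta=\mathrm{diag}(z_0-z_1,\ldots,z_{N-2}-z_{N-1},z_{N-1})$ that telescopes to $z_{\max(i,j)}$. The second is to swap the roles of $u$ and $v$, i.e. read the statement with $D_v$ and $z_i=u_i/v_i$, where the $\min$ version applies directly.

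My plan is therefore to present the telescoping computation carefully, identify which convention produces the stated entries, and note the required adjustment. Concretely, the stated identity holds verbatim when $S$ is the transpose-convention matrix $S_{i,j}=u_{\max(i,j)}v_{\min(i,j)}$, which is a one-pair matrix in the same class with the roles of the generators exchanged. Everything else is routine bookkeeping with diagonal scalings.
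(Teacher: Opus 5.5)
Your computation is correct, and the obstacle you flag is real. Under the paper's own definition, $S(u,v)_{ij}=u_{\min(i,j)}v_{\max(i,j)}$, and with this definition the identity as stated is false.

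The paper's sketch follows exactly your route. It first observes that $S(e,z)=D_u^{-1}S(u,v)D_u^{-1}$; this is your first step, with entries $z_{\max(i,j)}$, and it is right. It then asserts that ``it is easily seen'' that $S(e,z)=L\Delta_zL^T$. That identification is the one that fails, since $(L\Delta_zL^T)_{ij}=\sum_{k\le\min(i,j)}\delta_k=z_{\min(i,j)}$.

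A minimal counterexample is $u=(1,1)$, $v=(0,1)$. Then $S(u,v)=\begin{bmatrix}0&1\\1&1\end{bmatrix}$, while $D_uL\Delta_zL^TD_u=L\Delta_zL^T=\begin{bmatrix}0&0\\0&1\end{bmatrix}$. So no proof of the displayed identity is possible, and your proposal is right not to claim one.

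Your repairs are valid:
\begin{itemize}
\item The identity holds verbatim for the transposed convention, i.e.\ $S(v,u)=D_uL\Delta_zL^TD_u$.
\item For $S(u,v)$ itself, $S(u,v)=D_uL^T\Delta_z'LD_u$ with $\Delta_z'=\mathrm{diag}(z_0-z_1,\ldots,z_{N-2}-z_{N-1},z_{N-1})$. This form keeps the stated hypothesis $u_i\neq0$.
\item Your second option, $S(u,v)=D_vL\Delta_wL^TD_v$ with $w_i=u_i/v_i$, is also correct. However, it replaces the hypothesis $u_i\neq0$ by $v_i\neq0$, which you did not note, so it is a different theorem.
\end{itemize}

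When you write this up, commit to one corrected statement rather than leaving the choice open. The choice matters downstream. The paper later builds $\Delta_z$ from the differences $z_i-z_{i-1}$ together with $D_u$, and that fits only the $S(v,u)$ reading. Encoding $S(u,v)$ via your first fix would instead require the differences $z_i-z_{i+1}$, with $L^T$ and $L$ in swapped positions.
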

For a sketch of proof, observe that we have
$$
S(e, z)= D_u^{-1}S(u, v) D_u^{-1},
$$
where $e= (1, \dots,1)^T$. Moreover, it is easily seen that $S(e,z)=L\Delta_zL^T$, from which the thesis  follows immediately. Also note that, if $u_i=0$ for some index $i$, then $S(u,v)$ does not admit a triangular factorization.

We will see that, under the assumption that the considered one-pair matrices are real valued and have dimensions that are powers of two,  all the factors in \eqref{eq:mainfact} can be efficiently block encoded assuming access to a quantum-accessible data
structure such as QRAM~\cite{giovannetti2008quantum,kerenidis2017quantum} and efficient oracles for querying the vectors $u$ and $v$.

\section{Quantum circuits and block encodings}
\label{sec:qubit}
We use the {\em bra-ket} Dirac $\bra{\cdot}$ and $\ket{\cdot}$ notation to denote row
and column vectors, respectively.  The two canonical basis vectors of
$\mathbb{C}^2$ are written $\ket{0} = [1,0]^T$ and
$\ket{1} = [0,1]^T$.
A \emph{qubit} is a unit vector in $\mathbb{C}^2$, i.e., the two-dimensional complex Hilbert space with  basis $\{ \ket{0}, \ket{1} \}$; its most general state is a
superposition $\alpha\ket{0}+\beta\ket{1}$ with $|\alpha|^2+|\beta|^2=1$.

A joint system of qubits is described by the tensor product of the individual spaces, so  $\ket{x}\ket{y}=\ket{xy}= \ket{x} \otimes \ket{y} \in \mathbb{C}^2 \otimes\mathbb{C}^2 \cong \mathbb{C}^4.$ 
More generally, a quantum register of $n$ qubits lives in $(\mathbb{C}^2)^{\otimes n}\cong\mathbb{C}^N$,
$N=2^n$, so $n$ physical qubits describe a vector in an exponentially large
space. The standard basis of this space, called {\em computational basis} is $\{\ket{j}\}_{j=0}^{N-1}$, where
$j$ is the integer whose $n$-bit binary representation labels the basis state, and $\ket{j}$ is the $j$-th column of the $N\times N$ identity matrix $I_N$.
A general state of on $n$ qubit register,
\[
  \ket{\psi} = \sum_{j=0}^{N-1}\alpha_j\ket{j},\qquad \sum_j|\alpha_j|^2=1,
\]
encodes $2^n$ complex amplitudes simultaneously. In general this state is
\emph{entangled} and cannot be written as a tensor product of individual
qubit states, which is one of the key resources exploited by quantum
algorithms.
\paragraph{Gates.}
Quantum computation proceeds by applying unitary matrices, called gates, to
qubit registers. In this paper we use the following single-qubit gates whose matrix forms are
\[
  H = \frac{1}{\sqrt{2}}\begin{pmatrix}1&1\\1&-1\end{pmatrix},\quad
  X = \begin{pmatrix}0&1\\1&0\end{pmatrix},\quad
  Z = \begin{pmatrix}1&0\\0&-1\end{pmatrix},\quad
  R_y(\theta) = \begin{pmatrix}\cos\frac{\theta}{2}&-\sin\frac{\theta}{2}\\
                               \sin\frac{\theta}{2}&\cos\frac{\theta}{2}\end{pmatrix}.
\]

As  2-qubit gates  in this paper we use the CNOT gate and the SWAP gate : 
$$\operatorname{CNOT}=\begin{bmatrix}
    1 & 0&0&0\\0&1&0&0\\0&0&0&1\\0&0&1&0
\end{bmatrix}, \quad
{\rm{SWAP}}=\begin{bmatrix}
    1 & 0&0&0\\0&0&1&0\\0&1&0&0\\0&0&0&1
\end{bmatrix},$$
The CNOT gate inverts the state of the second qubit when the state of the first one is $\ket{1}$.
The SWAP gate exchanges the quantum states of two adjacent qubits. It is possible to introduce another operator which allows to swap any two non-necessarily adjacents qubits (we need to use three CNOT gates). 

In general, a controlled-$U$ gate applies a unitary $U$ to a target register
conditionally on a control qubit being in state $\ket{1}$. We also use multi-controlled gates, where the gate is applied to target qubits only if the control qubits are simultaneously in state $\ket{1}$. A
\emph{quantum circuit} is a sequence of such gates.

\paragraph{Measurement.}

In quantum mechanics, extracting classical information from a quantum state is done via a measurement.
A measurement is described by a set of projectors $\{P_m\}$ satisfying $\sum_m P_m = I$ and $P_m P_{m'} = \delta_{mm'} P_m$.
When a state $\ket{\psi}$ is measured, outcome $m$ occurs with probability $\|P_m\ket{\psi}\|^2 = \bra{\psi} P_m \ket{\psi}$, and the state collapses to $P_m\ket{\psi} / \|P_m\ket{\psi}\|$.
The most common case is when the measurement is performed in the computational basis; in this case $P_j = \ket{j}\bra{j}$ for basis states $\ket{j}$. In particular, a single qubit in state $\alpha\ket{0}+\beta\ket{1}$ yields
outcome $0$ with probability $|\alpha|^2$ and outcome $1$ with probability
$|\beta|^2$.
Measurement is how a quantum computer outputs a result; repeated measurements allow estimating the amplitudes with frequencies.
This paper focuses on unitary circuit constructions, but the reader may assume that any needed information can be obtained via suitable measurements with an overhead depending on the required accuracy.


\paragraph{Cost models}
\label{par:costmodel}
Quantum algorithms are typically analyzed within two complexity models. In the {\em query complexity} model~\cite{lin2022lecture,KLM}, the cost of an algorithm is measured by the number of calls to a black-box oracle $\mathcal{O}$
encoding the input; for instance, a unitary that maps $\mathcal{O}:\ket{i}\ket{0}\to \ket{i}\ket{x_i}$ for some data $x_i$. In general, a computation in this model is one which computes some function $f(x_0, \ldots, x_{N-1})$ accessing the oracle $\mathcal{O}$, and the query complexity of $f$ is the minimal number of queries to the oracle to output the correct value $f(x)$ for every $x$. 
 In the {\em circuit complexity model}~\cite{nielsen2010quantum,kitaev2002}, the cost is measured by three quantities: the size of the circuit, i.e.\ the total number of elementary gates drawn from a fixed universal gate set,  the depth, i.e.\ the length of the longest path from input to output in the directed acyclic graph of gate dependencies, and the width, which corresponds to the total number of qubits used in the circuit. Depth determines the runtime under the assumption of parallel execution of independent gates. We say that a circuit has polylogarithmic depth if its
gate depth is $\mathcal{O}(\rm{polylog}(N))$; this is the efficiency target
throughout this paper, as it corresponds to an exponential speedup over
classical sequential computation of typical linear algebra problems which usually require at least a linear number of operations in $N$. 

A query complexity result translates into a
circuit complexity result once the oracle is decomposed into elementary gates;
this cost may depend on how classical data is
loaded into the quantum state, for instance via QRAM. This cost can dominate the overall gate
count and depth and must be accounted for explicitly.

Throughout this work we will state costs in both models where relevant, distinguishing oracle queries from gate complexity to make the assumptions behind each result explicit. 
Whenever we claim polylogarithmic complexity, the claim refers to
gate complexity under the assumption that oracles can be implemented by circuit of depth $\mathcal{O}({\rm polylog}(N))$.

  
\subsection{Block encoding}\label{subsec:be}
A block encoding of a matrix $A$ is a unitary $U_A$ that encodes
$A/\alpha$ as its top left block, where $\alpha\ge \|A\|_2$ is a scaling factor.  We give below a formal definition.

\begin{definition}(Block encoding). Given an $n$-qubit matrix $A$ ($N = 2^n$), if we can find $\alpha, \varepsilon \in \mathbb{R}_+$ and an $(m+n)$-qubit
unitary matrix $U_A$ so that

$$
\| A - \alpha (\langle 0^m | \otimes I_N) U_A (|0^m \rangle \otimes I_N)\|_2 \le \varepsilon
$$
then $U_A$ is called an $(\alpha, m, \varepsilon)$-block-encoding of $A$. In particular, when the block encoding is exact with  $\varepsilon = 0$, $U_A$ is
called an $(\alpha, m)$-block-encoding of $A$.
\end{definition}

Here $m$ is the number of ancilla qubits used to block encode $A$, and the expression $(\langle 0^m | \otimes I_N) U_A (|0^m \rangle \otimes I_N)$ should
be interpreted as taking the upper-left $2^n \times 2^n$ matrix block of $U_A$.

Matrix block encoding allows us to perform several matrix operations, such as matrix-vector 
and matrix-matrix products and matrix inversion (examples can be found, e.g. in  \cite{camps2024explicit}). 

When matrices are small, e.g. $N=2$, explicit forms for $U_A$ can be given \cite{camps2024explicit}, but these are hardly extensible to a large setting.

As an example, a strategy to construct a block encoding of a (scaled) $n$-qubit Hermitian matrix $A$ is defining $U_A$ as follows
\begin{equation}\label{eq:BE}
    U_A =\begin{bmatrix}
A & (I - A^\dagger A)^{1/2} \\
(I - A^\dagger A)^{1/2} & - A
\end{bmatrix} \mbox{ or }
\begin{bmatrix}
A & -(I - A^\dagger A)^{1/2} \\
(I - A^\dagger A)^{1/2} &  A
\end{bmatrix}.
\end{equation}

This formula can be used for a 2×2 matrix $A$, but it is unpractical for larger matrices. Indeed, this definition involves the computation of a matrix square root and the diagonalization of $A$, which are computationally expensive. Moreover, the resulting unitary matrix $U_A$ must be decomposed into a sequence of elementary quantum gates. This decomposition would likely require a non-polylogarithmic gate depth, making it infeasible for large matrices. As such, this approach is unimplementable on a quantum computer using an efficient number of quantum gates.

It is important to keep the scaling factor $\alpha$ as close to $\|A\|_2$ as possible, as this directly impacts the success probability when we perform a measurement to extract $\frac{A}{\alpha} \ket{b}$. Specifically, when computing the matrix-vector product $Ab$, we must run the circuit that implements $U_A$ with the initial state $\ket{0}^m \ket{b}$ and discard all measurements that do not yield $0$ in the first $m$ qubits. The probability of measuring $0$ in the first $m$ qubits is given by $\frac{\|A\ket{b}\|^2}{\alpha^2}$; see, e.g., \cite{lin2022lecture} for a brief discussion. Therefore, if $\alpha$ is too large compared with $\|A\|_2$, the probability of measuring $0$ becomes very low, requiring many
repetitions or Amplitude Amplification~\cite{brassard2000quantum} at additional cost. It is therefore
important to keep $\alpha$ as close to $\|A\|_2$ as possible.

Achieving a  block encoding realizable by a circuit of
$\mathcal{O}(\rm{polylog}(N))$ depth and $\mathcal{O}(\rm{polylog}(N))$
ancilla qubits, is in general a nontrivial task, and is the central
algorithmic challenge addressed in this paper. 

\subsection{Block encoding of a linear combination of matrices}

The Linear Combination of Unitaries (LCU) \cite{Ch12} method is a powerful technique for block encoding non-unitary matrices. It works by decomposing a matrix $A$ as a weighted sum of unitary matrices, represented by the formula $A = \sum_{i=1}^m \alpha_i U_i$. The $\alpha_i$ are a set of positive coefficients, and the $U_i$ are easily implementable unitary matrices. Any negative sign of complex phase factors can be absorbed into the definition of the corresponding $U_i$ terms, so we can assume the coefficients are positive without loss of generality.

To implement this on a quantum computer, two key components are constructed. First, a select oracle $U = \sum_{i=0}^m \ket{i}\bra{i} \otimes U_i$ is created using $a=\log_2(m)$ control qubits. This oracle applies the unitary $U_i$ to a target register only when the control qubits are in the state $\ket{i}$. Second, a unitary $V$ is constructed to encode the coefficients of the linear combination, such that
$$
V \ket{0}^{\otimes a} = \frac{1}{\sqrt{\| \alpha\|_1 }} \sum_{i=1}^m \sqrt{\alpha_i} \ket{i}.
$$
A block encoding for $A$ can then be realized using the unitary $W = (V^\dagger \otimes I_n) U (V \otimes I_n)$, which effectively prepares the coefficient state, applies the correct unitary, and then un-prepares the coefficient state.

In our application, we only need to combine two matrices at a time. We assume
that the two matrices have the same dimension, and that this dimension is a
power of two. Thus, let \(A\) and \(B\) be two \(N\times N\) matrices, with
\(N=2^n\). Suppose that \(U_A\) is an \((\alpha,m,\epsilon_A)\)-block encoding
of \(A\), and that \(U_B\) is a \((\beta,m,\epsilon_B)\)-block encoding of
\(B\).

Using one additional control qubit, the circuit prepares a uniform superposition of two branches: one applying $U_A$ and the other applying $U_B$. After a final Hadamard on the control qubit, projecting the control and the block-encoding ancillas onto the all-zero state gives the normalized block ${1}/{2}\left({A}/{\alpha}+{B}/{\beta}\right)$. Therefore, the first circuit in Figure~\ref{fig:LCU_sum} gives a $(2\alpha\beta,m+1,2(\alpha\epsilon_B+\beta\epsilon_A))$-block encoding of the sum of the two original block encodings.

The second circuit, shown in Figure~\ref{fig:LCU_diff}, is obtained by selecting the opposite output branch after the final Hadamard through the additional $X$ gate. This gives the normalized block ${1}/{2}\left({A}/{\alpha}-{B}/{\beta}\right)$ and therefore a $(2\alpha\beta,m+1,2(\alpha\epsilon_B+\beta\epsilon_A))$-block encoding of the difference of the two original block encodings.

\begin{figure}[h!]
    \begin{subfigure}[t]{0.45\textwidth}
        \centering
        $
\begin{myqcircuit}
\lstick{\ket{0}}&\qw & \gate{H} &\ctrlo{1} & \qw & \ctrl{1} &\qw& \gate{H} & \qw \\
\lstick{\ket{0}}& {/}^m \qw & \qw &\multigate{1}{U_A}&\qw&\multigate{1}{U_B} & \qw & \qw & \qw\\
\lstick{\ket{i}}& {/}^n\qw & \qw & \ghost{U_A} &\qw & \ghost{U_B} & \qw& \qw & \qw
\end{myqcircuit}
        $
        \caption{Quantum circuit for summing block encoded matrices.}
        \label{fig:LCU_sum}
    \end{subfigure}
    \hfill
    \begin{subfigure}[t]{0.45\textwidth}
        \centering
        $
\begin{myqcircuit}
\lstick{\ket{0}}&\qw & \gate{H} &\ctrlo{1} & \qw & \ctrl{1} &\qw &\gate{H} & \qw& \gate{X} & \qw \\
\lstick{\ket{0}}& {/}^m \qw & \qw &\multigate{1}{U_A}&\qw&\multigate{1}{U_B} & \qw & \qw & \qw& \qw& \qw \\
\lstick{\ket{i} }&{/}^n\qw & \qw & \ghost{U_A} &\qw & \ghost{U_B} & \qw& \qw & \qw& \qw & \qw
\end{myqcircuit}
        $
        \caption{Quantum circuit for subtracting block encoded matrices.}
        \label{fig:LCU_diff}
    \end{subfigure}
    \caption{Quantum circuits illustrating how to combine two block encodings for matrices $A$ and $B$ to create a block encoding for their sum and difference, respectively.}
    \label{fig:LCU}
\end{figure}
In Section~\ref{sec:building_blocks} we use the concepts introduced here  to implement the encoding of $\Delta_z$ introduced in Proposition~\ref{prop:decomoposition}.

\subsection{Block encoding of a matrix product}
The block encoding of the product of two matrices $A$ and $B$ is performed essentially by multiplying the block encodings of $A$ and $B$ with some additional qubits, as explained in the following result \cite{gilyen2019quantum}.

\begin{lemma} (Product of block-encoded matrices)\label{lemma:prod} 
If $U_A$ is an $(\alpha, a, \delta)$-block-encoding of an $n$-qubit
operator $A$, and $U_B$ is a $(\beta, b, \varepsilon)$-block-encoding of an $n$-qubit operator $B$, then $(I_b\otimes U_a)(I_a\otimes U_B)$ is
an $(\alpha\beta, a+b, \alpha\varepsilon+\beta\delta)$-block-encoding of $AB$.
\end{lemma}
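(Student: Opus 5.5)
The plan is to work directly with the projected block of the product unitary, identifying ancilla registers carefully. Write the ancilla space as $a+b$ qubits, with the first $b$ qubits serving as ancillas of $U_B$ and the next $a$ as ancillas of $U_A$. Read the product $(I_b\otimes U_A)(I_a\otimes U_B)$ with the understanding that $U_A$ acts on its own $a$ ancillas together with the $n$ system qubits, and $U_B$ acts on its own $b$ ancillas together with the system qubits. Formally this means inserting the appropriate qubit permutations (SWAPs) so that each factor touches only its own ancilla register. We do not need to track these permutations beyond noting that they commute with the relevant projections onto $\ket{0}$ states.

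The first key step is an exact factorization of the top-left block. Write $\Pi_c=\bra{0^c}\otimes I_N$. Since $I_b\otimes U_A$ acts trivially on the $b$-register and $I_a\otimes U_B$ acts trivially on the $a$-register, we can project one register at a time:
\[
(\bra{0^{a+b}}\otimes I_N)(I_b\otimes U_A)(I_a\otimes U_B)(\ket{0^{a+b}}\otimes I_N)
=(\Pi_a U_A \Pi_a^\dagger)(\Pi_b U_B \Pi_b^\dagger).
\]
To justify this, first contract the $b$-register of $\bra{0^{a+b}}$ through $I_b\otimes U_A$. Then contract the $a$-register of $\ket{0^{a+b}}$ through $I_a\otimes U_B$. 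This leaves $\Pi_a U_A (\ket{0^a}\bra{0^a}\otimes I_N)\ldots$ in the middle. The subtle point is that the intermediate operator on the $a$-register is the identity, so it must be resolved. Because $I_a\otimes U_B$ is applied to $\ket{0^a}$ and acts as identity on that register, the $a$-register remains in $\ket{0^a}$ at the intermediate stage. Symmetrically, the $b$-register is untouched by $U_A$, so projecting it onto $\bra{0^b}$ commutes through. Hence the middle projector collapses exactly to a product of the two blocks. This bookkeeping is the main (though mild) obstacle: one must verify that no cross terms from other ancilla basis states survive, and the tensor-product separation of registers is exactly what ensures this.

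The second step is the error estimate. Set $\tilde A=\alpha\,\Pi_aU_A\Pi_a^\dagger$ and $\tilde B=\beta\,\Pi_bU_B\Pi_b^\dagger$, so that $\|A-\tilde A\|_2\le\delta$ and $\|B-\tilde B\|_2\le\varepsilon$. Since these are blocks of unitaries, $\|\tilde A\|_2\le\alpha$ and $\|\tilde B\|_2\le\beta$. Then
\[
\|AB-\tilde A\tilde B\|_2\le\|A-\tilde A\|_2\,\|B\|_2+\|\tilde A\|_2\,\|B-\tilde B\|_2 .
\]
The second term is at most $\alpha\varepsilon$. For the first term we would like to use $\|B\|_2\le\beta$, which follows if we adopt the paper's convention $\beta\ge\|B\|_2$. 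Alternatively, we can split the difference the other way, as $(A-\tilde A)\tilde B+A(B-\tilde B)$, and use $\|\tilde B\|\le\beta$ and $\|A\|\le\alpha$. Either route yields the bound $\alpha\varepsilon+\beta\delta$. Combining this with the exact block identity $\alpha\beta\,(\bra{0^{a+b}}\otimes I_N)W(\ket{0^{a+b}}\otimes I_N)=\tilde A\tilde B$ gives the claimed $(\alpha\beta,a+b,\alpha\varepsilon+\beta\delta)$-block-encoding of $AB$.
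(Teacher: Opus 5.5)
Your proof is correct and is essentially the standard argument of Gily\'en et al., which the paper cites rather than reproves; the paper adds only the remark that the tensor-product notation must be read modulo ancilla SWAPs, exactly as you do. In that argument the top-left block factors exactly as $(\tilde A/\alpha)(\tilde B/\beta)$, and then $AB-\tilde A\tilde B=(A-\tilde A)B+\tilde A(B-\tilde B)$ gives the bound. Your explicit appeal to $\beta\ge\|B\|_2$ (or $\alpha\ge\|A\|_2$) is genuinely needed: the formal definition alone only yields $\alpha\varepsilon+\beta\delta+\delta\varepsilon$, as the example $U_A=U_B=I$, $A=(\alpha+\delta)I$, $B=(\beta+\varepsilon)I$ shows.
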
 

The Kronecker product used in the lemma differs from the standard notation in the ordering of the ancilla and signal qubits, since the Identity operators act on each other ancilla qubits. We can express it in the standard form by using a swap operator which exchanges the $i$-th and $j$-th qubits. 
We can now write the circuit of the product as follows: 

\begin{center}
$
\begin{myqcircuit}
\lstick{\ket{0}}&{/}^a\qw &\qw&\multigate{2}{U_A  U_B} & \qw  &&& 
\lstick{\ket{0}}& {/}^a\qw &\qw& \qw& \qw & \qswap & \qw & \qw & \qw	\\
\lstick{\ket{0}}&{/}^b\qw  &\qw&\ghost{U_A  U_B} & \qw  &\push{\rule{.3em}{0em}=\rule{1.5em}{0em}}&& \lstick{\ket{0}}	& {/}^b\qw&\qw&\multigate{1}{U_B}   & \qw	   & \qswap \qwx   & \qw	& \multigate{1}{U_A}   & \qw	\\
\lstick{\ket{i}} &{/}^n\qw &\qw&\ghost{U_A  U_B} & \qw &&& \lstick{\ket{i}}	&{/}^n\qw&\qw&\ghost{U_B}           & \qw	   & \qw           & \qw	& \ghost{U_A} 	       & \qw	\\
\end{myqcircuit}
$
\end{center}
\hfill \break
Lemma~\ref{lemma:prod} allows us to perform a sequence of block-encoded matrix multiplications. It is also important to note that the previously defined SWAP operator has as its range, for the product of the sequence, the number of auxiliary qubits of the second block-encoded matrix. For instance, consider three matrices A, B, and C, along with their corresponding unitaries
encoding these matrices, $U_A, U_B$, and $U_C$. We obtain the block encodings: $(\alpha,a,\varepsilon_A)$, $(\beta,b,\varepsilon_B)$ and $(\gamma,c,\varepsilon_C)$. In a simple case where each encoding requires just one auxiliary qubit, the block encoding of $BC$ is$(\beta\gamma,2,\beta\varepsilon_C+\gamma\varepsilon_B)$. To obtain $ABC$, we need to swap a qubits (in this example $a=1$) and then block encode $A$. This results in the block encoding$(\alpha\beta\gamma,3,\alpha\beta\varepsilon_C+\alpha\gamma\varepsilon_B+ \beta\gamma\varepsilon_A)$ of $ABC$. The circuit to perform this example is as follows.
\newline
\begin{center}
$
\begin{myqcircuit}
\lstick{\ket{0}}	& {/}^a \qw  & \qw	& \qw	                & \qw      & \qw	       & \qw	& \qw	               & \qw	&  \qswap      &   \qw  & \qw                 & \qw  \\
\lstick{\ket{0}}	& {/}^b \qw	 & \qw	& \qw                & \qw      & \qswap		   & \qw	& \qw	               & \qw	&  \qw \qwx    &   \qw  & \qw                 & \qw \\
\lstick{\ket{0}}    & {/}^c \qw	 & \qw	& \multigate{1}{U_C}    & \qw	   & \qswap \qwx   & \qw	& \multigate{1}{U_B}   & \qw	&  \qswap \qwx &   \qw  & \multigate{1}{U_A}  & \qw  \\
\lstick{\ket{i}}	& {/}^n \qw& \qw   & \ghost{U_C}           & \qw	   & \qw           & \qw	& \ghost{U_B} 	       & \qw	&  \qw         &   \qw  & \ghost{U_A}         & \qw \\
\end{myqcircuit}
$
\end{center}
In Section~\ref{sec:building_blocks} we will use the concepts introduced here to implement a chain of products  for the final block-encoding.

\section{Overview on QBE for unstructured and structured matrices}\label{sec:QBE}

In this section, we present a block encoding technique for general unstructured matrices based on an approximate representation of the matrix. This methodology will be used in Section~\ref{Sec:NE} to compare our algorithm in terms of the error. We also discuss block encodings for several matrix structures that serve as building blocks for the block encoding of a one-pair matrix via equation~\eqref{eq:mainfact}.

 \subsection{The unstructured case: FABLE} \label{subsec:fable}

For unstructured matrices, FABLE~\cite{fable} is a state-of-the-art method for block encoding that operates in an explicit-access model: each matrix entry is loaded directly into the circuit via a dedicated rotation gate, so the full cost of data encoding is accounted for in the circuit complexity. Without any compression, this requires one multi-controlled rotation per entry, giving an overall complexity of $\mathcal{O}(N^2)$. FABLE reduces this cost through a classical preprocessing step based on a Walsh–Hadamard transform applied to the angles corresponding to the matrix entries, at an additional classical cost of $\mathcal{O}(N\log N)$. In the Walsh–Hadamard domain the angle vector is typically sparse, and two compression strategies are available: either a cutoff threshold $\delta_c$
is applied, zeroing all transformed angles below $\delta_c$, or a fixed compression rate is enforced by retaining only the largest-magnitude transformed angles. To avoid deep multi-controlled rotations, a Gray code ordering is applied before the circuit is assembled, reducing all rotations to single-controlled gates. The result is an $(N,\log N+1,N^3\delta_c)$-block encoding of the original matrix, where the error term reflects the truncation in the Walsh–Hadamard domain and does not account for errors in the original entries.

\subsection{Block encoding of diagonal matrices and their inverses}\label{sec:diag}

For the block encoding of diagonal matrices and their inverses, we rely on the following result (Lemma 10.10 of~\cite{AB09}), which we report below to make this paper self contained.

\begin{lemma} \label{lem:10:10}
If $f :\{0, 1\}^N \to \{0, 1\}^M$ 
is computable by a Boolean circuit of size $S$ then there is a sequence of $2S+M+N$ quantum elementary gates computing the mapping $\ket{x} \ket{0}^{2M+S} \to \ket{x} \ket{f(x)}\ket{0}^{S+M}$.
\end{lemma}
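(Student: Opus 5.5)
We proceed by simulation: start from a Boolean circuit $C$ of size $S$ computing $f$, turn it into a reversible circuit built from Toffoli, CNOT and NOT gates acting on basis states, and then uncompute the garbage. Each such reversible gate is a permutation matrix of the computational basis, so it is a quantum elementary gate. We may therefore reason entirely on classical bit strings and extend to superpositions by linearity.

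\textbf{Step 1: reversible forward pass.} Fix a universal gate set for $C$, say NAND, together with fan-out, and allocate one fresh ancilla qubit, initialized to $\ket{0}$, for each of the $S$ gates of $C$. The gates are processed in topological order. A gate $g$ with inputs $a,b$ (bits of $x$ or earlier wires) is realized by a Toffoli writing $a\wedge b$ onto its ancilla, followed by a NOT; this writes $\mathrm{NAND}(a,b)$ without disturbing $a$ and $b$. Fan-out comes for free, since any stored bit can be read as a control any number of times. After this pass the register holds $\ket{x}\ket{g(x)}$, where $g(x)\in\{0,1\}^S$ lists all intermediate wire values, including the $M$ output bits. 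Counting one gate per simulated gate, absorbing the NOTs into the convention that Toffoli with negated target counts as one elementary gate (or else counting them against the stated budget), this costs $S$ gates.

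\textbf{Step 2: copy the output.} $M$ further ancillas, initialized to $\ket{0}$, serve as the output register. For each of the $M$ output wires we apply one CNOT from that wire onto the corresponding output qubit. This costs $M$ gates and yields $\ket{x}\ket{f(x)}\ket{g(x)}$.

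\textbf{Step 3: uncompute.} The forward pass of Step 1 is run in reverse. Every gate used is self-inverse, so this takes another $S$ gates and returns the $S$ work ancillas to $\ket{0}$, while the copied output is left intact. Arranging the registers as in the statement, the result is the mapping $\ket{x}\ket{0}\mapsto\ket{x}\ket{f(x)}\ket{0}$, and linearity extends it to superpositions of inputs.

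\textbf{Main obstacle: matching the exact counts.} The remaining work is bookkeeping to match the stated gate count $2S+M+N$ and ancilla count $2M+S$. The construction above uses $2S+M$ gates, and the extra $N$ in the budget can be attributed to handling the input, for instance copying or complementing the input bits, depending on the gate model of \cite{AB09}. On the ancilla side, the $S$ work bits plus $M$ copy bits account for $S+M$. The remaining $M$ slots in $\ket{0}^{2M+S}$ are the dedicated output register that ends up holding $f(x)$; this accounts for the split $\ket{f(x)}\ket{0}^{S+M}$ on the right-hand side. If the output wires of $C$ are among its gates, the count is an overestimate, which is harmless. Because the statement only asserts an upper bound on resources, it is enough to show the construction fits within $2S+M+N$ gates. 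The uncomputation step is the conceptually essential one: without it the garbage $g(x)$ would remain entangled with $x$ and destroy coherence in later block-encoding steps.
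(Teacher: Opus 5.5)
Your argument is the standard compute--copy--uncompute (Bennett) construction, which is how the Arora--Barak lemma is proved; the paper only quotes that lemma and gives no proof of its own, and your construction yields $2S+M\le 2S+M+N$ gates. Two bookkeeping remarks: your fallback of charging the extra NOT gates to the budget does not work, since it costs $2S$ additional gates rather than at most $N$, so instead rely on the Arora--Barak model, in which any gate on at most three qubits, such as $\ket{a,b,c}\mapsto\ket{a,b,c\oplus g(a,b)}$ for whatever gate $g$ the given circuit uses (no need to convert to NAND), counts as one elementary operation; and your construction touches only $S+M$ ancillas, $M$ of which end up holding $f(x)$, so the remaining $M$ of the $2M+S$ simply stay in $\ket{0}$.
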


The following technical lemma, derived from Proposition 4 of~\cite{li2023efficient} and Chapter 4 of~\cite{lin2022lecture}, provides a circuit-based method to encode a real value $\alpha \in [-1,1]$ as the amplitude of $\ket{0}$ of an ancilla qubit. The key idea is to represent $\alpha$ via the angle $\vartheta = \frac{1}{\pi}\arcsin(\alpha)$, so that $\alpha = \sin(\pi\vartheta)$, and then exploit the binary expansion of $\vartheta$ to implement the encoding through a sequence of controlled $R_y$ rotations.
\begin{lemma} \label{lemma:theta}
Let $\vartheta\in\mathbb{R}$ with $|\vartheta|<1$.
 Assume that $\vartheta$ has the following exact $t$-bits binary expansion 
$$
\vartheta=(-1)^{\vartheta^{\,\rm sgn}}(0.\vartheta_{t-1}\vartheta_{t-2}\dots\vartheta_0)=(-1)^{\vartheta^{\,\rm sgn}} \sum^{t-1}_{j=0}\vartheta_j2^{-(t-j)}, \quad \vartheta_j \in \{0, 1\}.
$$

Then, with the convention that $\ket{{\vartheta^{\,\rm sgn}}}=\ket{0}$ if $\vartheta \ge 0$ and $\ket{{\vartheta^{\,\rm sgn}}}=\ket{1}$  if $\vartheta<0$, and  that $\ket{\vartheta}= \ket{\vartheta_{t-1} \vartheta_{t-2}\ldots \vartheta_0}$ there exists 
    a unitary $U_{\vartheta}$ acting on $t+2$ qubits such that 
    $$
    U_\vartheta \ket{0}\ket{\vartheta^{\,\rm sgn}}\ket{\vartheta} =  (-1)^{\vartheta^{\,\rm sgn}}\left(\sin(\pi \vartheta) \ket{0} +\cos(\pi\vartheta)\ket{1}\right)\ket{\vartheta^{\,\rm sgn}}\ket{\vartheta}.$$
\end{lemma}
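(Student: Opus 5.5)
Since $\vartheta$ is a $t$-bit binary fraction with sign, I will build $U_\vartheta$ as a product of controlled $y$-rotations on the first (target) qubit, each controlled by one bit of $\ket{\vartheta}$, followed by a fixed single-qubit correction and a sign fix controlled by $\ket{\vartheta^{\,\rm sgn}}$. The rotation $R_y(\phi)$ maps $\ket{0}\mapsto \cos(\phi/2)\ket{0}+\sin(\phi/2)\ket{1}$, and $R_y(\phi_1)R_y(\phi_2)=R_y(\phi_1+\phi_2)$ because all $R_y$ commute. So if the target qubit starts in $\ket{0}$, I apply, for each $j=0,\dots,t-1$, a rotation $R_y(2\pi 2^{-(t-j)})$ controlled by qubit $\vartheta_j$. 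On the basis state $\ket{\vartheta}$ this gives $R_y(2\pi\sum_j\vartheta_j2^{-(t-j)})=R_y(2\pi|\vartheta|)$. The resulting state is $\cos(\pi|\vartheta|)\ket{0}+\sin(\pi|\vartheta|)\ket{1}$.

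Next I swap the roles of sine and cosine with a fixed gate. Applying $X$ gives $\sin(\pi|\vartheta|)\ket{0}+\cos(\pi|\vartheta|)\ket{1}$; alternatively I could start with $R_y(\pi)$ (so $\ket{0}\mapsto\ket{1}$) and adjust signs. I will just use $X$ after the rotations. For $\vartheta\ge 0$ this is already the target state, since $\vartheta^{\,\rm sgn}=0$.

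For $\vartheta<0$ the target is $(-1)(\sin(\pi\vartheta)\ket{0}+\cos(\pi\vartheta)\ket{1}) = \sin(\pi|\vartheta|)\ket{0}-\cos(\pi|\vartheta|)\ket{1}$. So when the sign qubit is $\ket{1}$ I must apply $Z$ to the target qubit, i.e. a controlled-$Z$ (a CZ gate) between the sign qubit and the target. The full unitary is then $U_\vartheta = (\mathrm{CZ}_{\rm sgn\to 0})(X\otimes I)\prod_{j}\mathrm{C}_{\vartheta_j}R_y(2\pi2^{-(t-j)})$ on $t+2$ qubits. It is unitary as a product of unitaries, and it leaves $\ket{\vartheta^{\,\rm sgn}}\ket{\vartheta}$ unchanged because all gates only use those qubits as controls. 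The formula then follows by the case check above and linearity.

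The only delicate step is bookkeeping the sign and phase conventions: confirming that $\sin$ lands on $\ket{0}$ with the prescribed overall factor $(-1)^{\vartheta^{\,\rm sgn}}$, using oddness of $\sin$ and evenness of $\cos$. I will verify both cases explicitly. The gate count is $t$ controlled rotations plus two gates, which is linear in $t$, as needed later.
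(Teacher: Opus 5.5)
Your construction is correct and proves the displayed identity. On the basis input, the controlled cascade acts as $R_y(2\pi|\vartheta|)$, giving $\cos(\pi|\vartheta|)\ket{0}+\sin(\pi|\vartheta|)\ket{1}$. The $X$ then swaps the two amplitudes. When $\vartheta<0$, the sign-controlled $Z$ on the target yields $\sin(\pi|\vartheta|)\ket{0}-\cos(\pi|\vartheta|)\ket{1}$, which equals $-(\sin(\pi\vartheta)\ket{0}+\cos(\pi\vartheta)\ket{1})$.

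The paper's proof is only the circuit of Figure~\ref{fig:thetaf}. It uses the same cascade, with $R_y(\pi/2^{t-1-j})=R_y(2\pi 2^{-(t-j)})$ controlled by $\vartheta_j$. It has no $X$ gate, and it handles the sign with an uncontrolled $Z$ on the sign qubit, i.e.\ a phase $(-1)^{\vartheta^{\,\rm sgn}}$ on the whole branch. With the paper's own definition of $R_y$, that circuit outputs $(-1)^{\vartheta^{\,\rm sgn}}\bigl(\cos(\pi|\vartheta|)\ket{0}+\sin(\pi|\vartheta|)\ket{1}\bigr)$. Your two modifications are therefore exactly what the displayed formula requires. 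The $X$ puts the sine on $\ket{0}$. The CZ is needed because, read literally, the $\ket{0}$-amplitude $(-1)^{\vartheta^{\,\rm sgn}}\sin(\pi\vartheta)=\sin(\pi|\vartheta|)$ is always nonnegative, and only the $\ket{1}$-amplitude changes sign.

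This literal form is not the one used later, however. In the proof of Theorem~\ref{diag_encoding}, the register holds $\tilde\vartheta_i\ge 0$, and the $\ket{0}$-amplitude must be $(-1)^{d_i^{\,\rm sgn}}\sin(\pi\tilde\vartheta_i)$ to carry the sign of $d_i$. That version is $(-1)^{\vartheta^{\,\rm sgn}}\bigl(\sin(\pi|\vartheta|)\ket{0}+\cos(\pi|\vartheta|)\ket{1}\bigr)$. You get it from your circuit by replacing the CZ with the paper's plain $Z$ on the sign qubit while keeping your $X$. So your argument proves the lemma as stated, while the paper's choice of sign gate is the one the application needs.
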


\begin{figure}[H]
\begin{center}
$
\begin{myqcircuit}
\lstick{\ket{0}}            & \qw  & \gate{R_y(\pi)} & \gate{R_y(\pi/2)} & \qw &\qw &     &     & \ldots &     &     &     & \qw & \gate{R_y(\pi/2^{t-1})} & \qw & \qw \\
\lstick{\ket{\vartheta^{\,\rm sgn}}} & \gate{Z} & \qw               & \qw                 & \qw &\qw & \qw & \qw & \qw    & \qw & \qw & \qw & \qw & \qw                       & \qw        & \qw \\
\lstick{\ket{\vartheta_{t-1}}} & \qw       & \ctrl{-2}         & \qw                 & \qw &\qw & \qw & \qw & \qw    & \qw & \qw & \qw & \qw & \qw                       & \qw        & \qw \\
\lstick{\ket{\vartheta_{t-2}}} & \qw       & \qw               & \ctrl{-3}           & \qw &\qw & \qw & \qw & \qw    & \qw & \qw & \qw & \qw & \qw                       & \qw        & \qw \\
\lstick{\dots}  \\
\lstick{\ket{\vartheta_0}}     & \qw       & \qw               & \qw                 & \qw &\qw & \qw & \qw & \qw    & \qw & \qw & \qw & \qw & \ctrl{-5}                 & \qw         & \qw
\end{myqcircuit}
$
\end{center}
\caption{Controlled rotations of the powers of two}\label{fig:thetaf}
\end{figure}

\begin{proof}
In Figure~\ref{fig:thetaf} is the circuital implementation of $U_\vartheta$. All the rotations are controlled by the binary representation of $\vartheta$ and act on the ancillary qubit.  The  $Z$ gate is employed to accurately encode the sign of $\vartheta$.

\end{proof}

In particular, using Lemma~\ref{lemma:theta} we can block encode a diagonal matrix $D$ as well as its inverse by encoding the entries of $D$ and its inverse as angles.

\begin{theorem}\label{diag_encoding}
Let $D\in \mathbb{R}^{N\times N}$ be a diagonal matrix, 
$D={\rm diag}(d_0, d_1, \ldots, d_{N-1})$ with $N=2^n$. 
Let $c=\frac{\sqrt{\pi^2-1}}{\pi}$, $M=\max_{i=0,\ldots,N-1}|d_{i}|$, 
and let
\[
  t = \left\lceil \log_2\!\left(\frac{M\pi}{c\varepsilon}\right)\right\rceil + 1.
\]
Assume that:
\begin{description}
    \item[{\em (i.)}]  There exists an oracle $O_D$ such that
    \[
      O_D\ket{0}_{t+1}\ket{i} 
      = \ket{d_i^{\,\rm sgn}}\Bigl|\tfrac{c|\tilde{d}_{i}|}{M}\Bigr\rangle_{t}\ket{i},
    \]
    where $\ket{d_i^{\,\rm sgn}}$ encodes the sign of $d_i$ and 
    $|\tilde{d}_i - d_i| \leq 2^{-t}M/c$.

  \item[{\em (ii.)}] There exists a classical Boolean circuit of depth $\mathcal{O}({\rm poly}(t))$ 
    that computes a $t$-digit fixed-point approximation of 
    $x \mapsto \frac{1}{\pi}{\rm arcsin}(x)$ for $x\in[0,c]$ 
    given with $t$-digit precision.

\end{description}

Then an $(M/c,\, 1,\, \varepsilon)$-block encoding of $D$ can be realized 
by a quantum circuit consisting of two queries to $O_D$ and $O_D^\dagger$ 
and a circuit of depth $\mathcal{O}({\rm poly}(t))$, independent of $N$. 
The circuit uses $t+1$ additional ancilla qubits, which are uncomputed at 
the end of the computation.
\end{theorem}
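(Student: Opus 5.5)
The circuit applies $O_D$, then a reversible arcsine evaluation, then the rotation unitary $U_\vartheta$ of Lemma~\ref{lemma:theta}, and finally uncomputes all workspace. On input $\ket{0}\ket{0}_{t+1}\ket{0}_{\rm work}\ket{i}$, the first step calls $O_D$. This writes the sign qubit $\ket{d_i^{\,\rm sgn}}$ and the $t$-bit register $\bigl|c|\tilde d_i|/M\bigr\rangle$. By hypothesis (ii) and Lemma~\ref{lem:10:10}, a Boolean circuit of size and depth $\mathcal{O}({\rm poly}(t))$ becomes a reversible quantum circuit. That circuit maps this register to a $t$-bit register holding $\vartheta_i \approx \frac{1}{\pi}\arcsin(c|\tilde d_i|/M)$, with garbage returned to $\ket{0}$. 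Since $c<1$, the argument lies in $[0,c]\subset[0,1)$, so $|\vartheta_i|<1/2$ and Lemma~\ref{lemma:theta} applies.

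Next we apply $U_\vartheta$ with the top qubit as the rotation target, using the sign qubit $\ket{d_i^{\,\rm sgn}}$ and the $\vartheta_i$ register. This produces the state
\[
(-1)^{d_i^{\,\rm sgn}}\bigl(\sin(\pi\vartheta_i)\ket{0}+\cos(\pi\vartheta_i)\ket{1}\bigr)\ket{d_i^{\,\rm sgn}}\ket{\vartheta_i}\ket{i}.
\]
The phase $(-1)^{d_i^{\,\rm sgn}}$ multiplies the amplitude $\sin(\pi\vartheta_i)\approx c|\tilde d_i|/M$, so we recover the signed value $c\tilde d_i/M$. Then we uncompute. We apply the inverse of the arcsine circuit and then $O_D^\dagger$. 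These act only on the work registers and are controlled by $\ket{i}$, so they commute with the action on the top qubit. This returns the $t+1$ ancillas to $\ket{0}$ in the branch where the top qubit is $\ket 0$, and likewise in the $\ket1$ branch. Projecting the top qubit onto $\ket 0$ therefore gives $(\bra{0}\otimes I)U(\ket{0}\otimes I)=\mathrm{diag}\bigl((-1)^{d_i^{\rm sgn}}\sin(\pi\vartheta_i)\bigr)$. This is a $1$-ancilla block encoding with normalization $M/c$, and it uses the two queries $O_D$ and $O_D^\dagger$. The depth is $\mathcal{O}({\rm poly}(t))$: $U_\vartheta$ has $t$ controlled rotations plus the reversible arithmetic, and none of this depends on $N$. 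The $t+1$ extra ancillas are restored to zero.

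The main obstacle is the error analysis: we must show $\bigl|d_i-(M/c)(-1)^{d_i^{\rm sgn}}\sin(\pi\vartheta_i)\bigr|\le\varepsilon$ for every $i$. The spectral norm of a diagonal error matrix is its largest entry, so this bound suffices. I would split the error into two parts. The first is the data error $|d_i-\tilde d_i|\le 2^{-t}M/c$. The second is the truncation of $\vartheta_i$ to $t$ bits, with error at most $2^{-t}$, which moves $\sin(\pi\vartheta)$ by at most $\pi 2^{-t}$. After scaling by $M/c$, this contributes at most $\pi 2^{-t}M/c$. Note that $\sin\circ\pi\cdot\frac1\pi\arcsin$ is the identity on $[0,c]$, so there is no further distortion. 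The total error is therefore at most $(1+\pi)2^{-t}M/c$. With $t=\lceil\log_2(M\pi/(c\varepsilon))\rceil+1$ we get $2^{-t}M/c\le \varepsilon/(2\pi)$, so the total is at most $(1+\pi)\varepsilon/(2\pi)<\varepsilon$. I expect the care to go into the fixed-point rounding conventions, specifically whether the arcsine circuit's error is one ulp or a half ulp. The extra $+1$ in $t$ leaves slack for this.
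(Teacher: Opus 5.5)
Your proof is correct. It uses the same circuit as the paper: $O_D$, the reversible arcsine $\mathcal{A}$, $U_\vartheta$, then $\mathcal{A}^\dagger$ and $O_D^\dagger$, with the same uncomputation argument. The one real difference is how you bound the error.

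\textbf{The paper's route.} The paper works in the angle domain. It bounds $\bigl|f(c|\tilde d_i|/M)-\frac{1}{\pi}\arcsin(c|d_i|/M)\bigr|\le 2^{-t}+2^{-t}$. To do this it pushes the data error through $\arcsin$ with the mean value theorem. The Lipschitz constant of $x\mapsto\frac{1}{\pi}\arcsin(cx)$ on $[0,1]$ is exactly $c$, because $\sqrt{1-c^2}=1/\pi$; this is where the specific choice $c=\sqrt{\pi^2-1}/\pi$ is used. The paper then applies $|\sin x-\sin y|\le|x-y|$ once and obtains $2\pi\,2^{-t}M/c\le\varepsilon$.

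\textbf{Your route.} You split the error in the value domain instead. Since $\sin(\arcsin x)=x$, the data error $|d_i-\tilde d_i|$ passes through with no distortion. Only the $t$-bit angle truncation goes through the Lipschitz bound of $\sin$. This gives the slightly sharper bound $(1+\pi)2^{-t}M/c$. It also needs no derivative bound on $\arcsin$, so the particular value of $c$ does not enter your estimate beyond keeping the argument inside the domain.

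\textbf{One small imprecision.} The circuit produces $(-1)^{d_i^{\rm sgn}}\,c|\tilde d_i|/M$ (up to truncation). This equals $c\tilde d_i/M$ only when $\tilde d_i$ and $d_i$ have the same sign. Your bound survives anyway, because $\bigl||d_i|-|\tilde d_i|\bigr|\le|d_i-\tilde d_i|$.
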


\begin{proof}
The quantum circuit is depicted in Figure~\ref{fig:diag}.
Let $\varepsilon$ be the required precision for the block encoding and 
let $t = \lceil \log_2(M\pi/c\varepsilon)\rceil + 1$.

The oracle $O_D$,
\[
  O_D\ket{0}_{t+1}\ket{i} 
  = \ket{d_i^{\,\rm sgn}}\Bigl|\tfrac{c|\tilde{d}_{i}|}{M}\Bigr\rangle_{t}\ket{i},
\]
provides $t$-digit approximations $\tilde{d}_i$ of the entries $d_i$ such 
that $|\tilde{d}_i - d_i| \leq 2^{-t}M/c$, or equivalently 
$\left|\frac{c|\tilde{d}_i|}{M} - \frac{c|d_i|}{M}\right| \leq 2^{-t}$.
The extra qubit $\ket{d_i^{\,\rm sgn}}$ encodes the sign of $d_i$ 
($\ket{0}$ for positive, $\ket{1}$ for negative).

Consider the function $f:\{0,1\}^t \to \{0,1\}^t$ where $f(x)$ is the 
output of the Boolean circuit from hypothesis~(ii), implementing a 
$t$-digit approximation of $x \mapsto \frac{1}{\pi}\arcsin(x)$ for 
$x \in [0,c]$. It holds $|f(x) - \frac{1}{\pi}\arcsin(x)| \leq 2^{-t}$.

By the standard reversible simulation of Boolean 
circuits (see Lemma~\ref{lem:10:10}), hypothesis~(ii) implies that there exists 
a unitary $\mathcal{A}$, implementable by a quantum circuit of depth 
$\mathcal{O}({\rm poly}(t))$ and independent of $N$, that computes $f$  
on any superposition of inputs:
\[
  \mathcal{A}\ket{d_i^{\,\rm sgn}}
 \ket{\tfrac{c|\tilde{d}_{i}|}{M}}_{t}\ket{i}
  = \ket{d_i^{\,\rm sgn}}
  \ket{f\left(\tfrac{c|\tilde{d}_{i}|}{M}\right)}_{t}\ket{i}.
\]

The approximation error satisfies:
\begin{eqnarray}
  &&\left|f\!\left(\frac{c|\tilde{d}_{i}|}{M}\right) 
    - \frac{1}{\pi}\arcsin\!\left(\frac{c|d_{i}|}{M}\right)\right| \nonumber\\
  &&\leq\left|f\!\left(\frac{c|\tilde{d}_{i}|}{M}\right)
    -\frac{1}{\pi}\arcsin\!\left(\frac{c|\tilde{d}_{i}|}{M}\right)\right| 
    + \left|\frac{1}{\pi}\arcsin\!\left(\frac{c|\tilde{d}_{i}|}{M}\right)
    -\frac{1}{\pi}\arcsin\!\left(\frac{c|d_{i}|}{M}\right)\right|\nonumber\\
  &&\leq 2^{-t} + c\left|\frac{|\tilde{d}_{i}|}{M} 
    - \frac{|d_{i}|}{M}\right| 
    \leq 2^{1-t} \leq \frac{c\varepsilon}{\pi M} \leq \frac{\varepsilon}{\pi M},
    \label{eq:approx-arcsin}
\end{eqnarray}
where in the second inequality we applied the Lagrange theorem to 
$\frac{1}{\pi}\arcsin(x)$ on $[0,c]$.

Denoting $\tilde{\vartheta}_i = f(c|\tilde{d}_i|/M)$ and applying the 
circuit $U_\vartheta$ from Figure~\ref{fig:thetaf}, we obtain the state
$$
  (-1)^{d_i^{\,\rm sgn}}\!\left(\cos(\pi\tilde{\vartheta}_i)\ket{1}
  + \sin(\pi\tilde{\vartheta}_i)\ket{0}\right)
  \ket{d_i^{\,\rm sgn}}\ket{\tilde{\vartheta}_i}\ket{i}.
$$

By uncomputing the middle register via $\mathcal{A}^\dagger$ followed by 
$O_D^\dagger$, the state becomes
\[
  (-1)^{d_i^{\,\rm sgn}}\!\left(\cos(\pi\tilde{\vartheta}_i)\ket{0}
  + \sin(\pi\tilde{\vartheta}_i)\ket{1}\right)\ket{0}_{t+1}\ket{i}.
\]

From \eqref{eq:approx-arcsin} and the inequality $|\sin(x)-\sin(y)|\leq|x-y|$,
\begin{eqnarray*}
  \left|(-1)^{d_i^{\,\rm sgn}}\sin(\pi\tilde{\vartheta}_i) 
    - (-1)^{d_i^{\,\rm sgn}}\frac{c|d_i|}{M}\right|
  &=& \left|\sin(\pi\tilde{\vartheta}_i) - \frac{c|d_i|}{M}\right|\\
  &\leq& \left|\pi\tilde{\vartheta}_i 
    - \arcsin\!\left(\frac{c|d_i|}{M}\right)\right| 
  < \frac{c\varepsilon}{M},
\end{eqnarray*}
which confirms the $(M/c,\,1,\,\varepsilon)$-block encoding. The encoding 
operator acts as
\begin{equation}\label{eq:oracleD}
  U_D\ket{0}\ket{i} \to \left(\frac{c\tilde{d}_i}{M}
  \ket{0} + \sqrt{1 - \frac{c^2\tilde{d}_i^2}{M^2}}\ket{1}\right)\ket{i}.
\end{equation}

The circuit in Figure~\ref{fig:diag} requires two queries to $O_D$ and 
$O_D^\dagger$, one application each of $\mathcal{A}$, $\mathcal{A}^\dagger$, 
and $U_\vartheta$. By hypothesis~(ii) and the reversible simulation 
argument above, $\mathcal{A}$, $\mathcal{A}^\dagger$, and $U_\vartheta$ 
each have depth $\mathcal{O}({\rm poly}(t))$, with no dependence on $N$. The 
overall circuit depth is therefore $\mathcal{O}({\rm poly}(t))$ plus two queries 
to $O_D$. After uncomputation, the $t+1$ ancilla qubits are returned 
to $\ket{0}_{t+1}$ and can be discarded.
\end{proof}
    
\begin{remark}
The assumption on the Boolean circuit for $\frac{1}{\pi}\arcsin(x)$ is 
reasonable and motivated by the literature. Indeed, classical algorithms for evaluating 
$\arcsin$ to $t$-digit precision, such as Chebyshev polynomial 
approximation evaluated via Horner's rule~\cite{Brent1976}, or CORDIC-type 
iterations, admit Boolean circuits of depth $\mathcal{O}(t^2)$ or $\mathcal{O}(t \log t)$ 
on $t$-bit inputs~\cite{Muller2018}. 

\end{remark}

\begin{figure}
\begin{center}
$
\begin{myqcircuit}
\lstick{\ket{0}} & \qw&\qw	&\qw  &\qw& \qw &\qw&\qw&\qw &\multigate{1}{U_\vartheta} &\qw & \qw	& \qw &\qw &\qw\\
\lstick{\ket{0}}& {/}^{t+1} \qw &\qw&\qw&\multigate{1}{O_D}     & \qw   &\multigate{1}{\mathcal{A}}   &\qw  &\qw   &\ghost{U_\vartheta}   & \qw& \multigate{1}{\mathcal{A}^\dagger} &\qw& \multigate{1}{O_D^\dagger}    &\qw\\
\lstick{\ket{i}}   &{/}^{n} \qw &\qw&\qw& \ghost{O_D} & \qw  &   \ghost{A}   &\qw &\qw&\qw &\qw&\ghost{A^\dagger} &\qw& \ghost{O_D^\dagger} & \qw
\end{myqcircuit}
$
\end{center}
\caption{Circuit for the block encoding of a diagonal matrix}\label{fig:diag}
\end{figure}

The circuit for the block encoding of the inverse of a diagonal matrix is analogous to the encoding of $D$ except for the fact that we need to encode a value proportional to $1/d_{i}$~\cite{Tong2021}.

While Theorem~\ref{diag_encoding} encodes $D$ using an oracle $O_D$ 
that provides $c|d_i|/M$ to $t$-digit precision, encoding $D^{-1}$ 
via the same approach would require a separate oracle for the entries 
$1/d_i$, which may not be available without additional  
preprocessing (with an $\mathcal{O}(N)$ operations in the classical setting). The following theorem shows that a block encoding of 
$D^{-1}$ can instead be realized using an oracle of the same type as 
$O_D$ and avoiding 
any explicit classical computation of the inverse entries.

\begin{theorem}\label{teo:fast_inversion}
Let $D\in \mathbb{R}^{N\times N}$ be an invertible diagonal matrix 
$D= {\rm diag}(d_0,\ldots, d_{N-1})$ and let 
$M=\max_{i=0,\ldots,N-1}|d_{i}|$, $m=\min_{i=0,\ldots,N-1}|d_{i}|$, 
$\mu=M/m$, and $k=c/\mu$ where $c=\frac{\sqrt{\pi^2-1}}{\pi}$ as in 
Theorem~\ref{diag_encoding}.

Let $t = \lceil\log_2(M\pi/\varepsilon)\rceil + 1$ and 
$t' = t + \lceil\log_2(c\mu)\rceil$. Assume that:
\begin{description}
  \item[{\em(i.)}] There exists an oracle $O_D$ such that
    $$
      O_D\ket{0}_{t'+1}\ket{i}
      = \ket{d_i^{\,\rm sgn}}
      \ket{\tfrac{|\tilde{d}_{i}|}{M}}_{t'}\ket{i},
    $$
    where $\ket{d_i^{\,\rm sgn}}$ encodes the sign of $d_i$ and 
    $|\tilde{d}_i/M - d_i/M| \leq 2^{-t'}$.

   \item[{\em(ii.)}] There exists a classical Boolean circuit of depth $\mathcal{O}({\rm poly}(t))$ 
    that computes a $t$-digit fixed-point approximation of 
    $x \mapsto \frac{1}{\pi}{\rm arcsin}(k/x)$ for $x\in[1/\mu, 1]$ 
    given with $t'$-digit precision.
\end{description}

Then a $\left(1/(c\,m),\, 1,\, \varepsilon\right)$-block encoding of $D^{-1}$ can be 
realized by a quantum circuit consisting of two queries to $O_D$ and 
$O_D^\dagger$ and a circuit of depth $\mathcal{O}({\rm poly}(t'))$, independent 
of $N$. The circuit uses $t'+1$ additional ancilla qubits, which are 
uncomputed at the end of the computation.
\end{theorem}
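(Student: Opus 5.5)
The proof will mirror the one of Theorem~\ref{diag_encoding}, replacing the map $x\mapsto \frac1\pi\arcsin(x)$ by $x\mapsto \frac1\pi\arcsin(k/x)$. We use the same circuit as in Figure~\ref{fig:diag}. First, $O_D$ loads the sign qubit and the $t'$-digit value $x_i=|\tilde d_i|/M\in[1/\mu,1]$ (up to the $2^{-t'}$ error). Next, by hypothesis (ii) and Lemma~\ref{lem:10:10}, a reversible unitary $\mathcal{A}$ of depth $\mathcal{O}({\rm poly}(t'))$ writes $\tilde\vartheta_i=f(x_i)$, a $t$-digit approximation of $\frac1\pi\arcsin(k/x_i)$. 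This is well defined because $k/x_i\le k\mu=c<1$. Then $U_\vartheta$ from Lemma~\ref{lemma:theta} puts $(-1)^{d_i^{\rm sgn}}\sin(\pi\tilde\vartheta_i)$ on the $\ket{0}$ amplitude of the top ancilla. Finally, $\mathcal{A}^\dagger$ and $O_D^\dagger$ uncompute the $t'+1$ work qubits. The sign is handled exactly as before, since ${\rm sgn}(1/d_i)={\rm sgn}(d_i)$.

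Next I would identify the encoded value. Since $k/x=\frac{c}{\mu}\frac{M}{|d_i|}=\frac{c\,m}{|d_i|}$, the target amplitude is $c\,m/d_i$. This gives the normalization $1/(c\,m)$ for $D^{-1}$, and it satisfies $1/(cm)\ge \|D^{-1}\|_2=1/m$, as a block encoding requires.

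The error analysis is the main step. Using the triangle inequality as in \eqref{eq:approx-arcsin}, I would write
\[
\Bigl|\tilde\vartheta_i-\tfrac1\pi\arcsin\bigl(\tfrac{k}{|d_i|/M}\bigr)\Bigr|\le 2^{-t}+\Bigl|\tfrac1\pi\arcsin\bigl(\tfrac{k}{x_i}\bigr)-\tfrac1\pi\arcsin\bigl(\tfrac{k}{|d_i|/M}\bigr)\Bigr|.
\]
The second term is bounded with the Lagrange theorem applied to $g(x)=\frac1\pi\arcsin(k/x)$ on $[1/\mu,1]$. Its derivative is $|g'(x)|=\frac{k}{\pi x^2\sqrt{1-k^2/x^2}}$, which is maximal at $x=1/\mu$. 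There $k/x=c$ and $\sqrt{1-c^2}=1/\pi$, so $|g'|\le k\mu^2=c\mu$. This is exactly why $c$ was chosen. One detail needs care: the perturbed $x_i$ might leave $[1/\mu,1]$ slightly. I would handle this by assuming that the oracle clamps the value to that interval, or by absorbing the excursion into the constant. The second term is then at most $c\mu\,2^{-t'}\le 2^{-t}$, by the definition of $t'=t+\lceil\log_2(c\mu)\rceil$. The angle error is therefore at most $2^{1-t}\le \varepsilon/(\pi M)$.

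Finally, $|\sin a-\sin b|\le|a-b|$ gives $|\sin(\pi\tilde\vartheta_i)-c m/|d_i||\le \varepsilon/M$. Multiplying by the normalization $1/(cm)$ and using that the operator is diagonal, the spectral-norm error is at most $\varepsilon/(c\,mM)\cdot$(constant). I expect the bookkeeping of scaling factors here to be the delicate point. If the constant does not close exactly, I would adjust $t$ by an additive $\lceil\log_2(1/(cm))\rceil$, or interpret $\varepsilon$ relative to the normalized block, as Theorem~\ref{diag_encoding} implicitly does. The resource count then follows verbatim from the previous theorem: two oracle queries, depth $\mathcal{O}({\rm poly}(t'))$ independent of $N$, and $t'+1$ ancillas returned to $\ket{0}$.
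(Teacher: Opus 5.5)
Your route is the paper's proof essentially step for step. You use the same circuit (Figure~\ref{fig:fast_inv}), the same triangle inequality, the same mean-value bound $|g'(x)|\le c\mu$ on $[1/\mu,1]$, the same absorption $c\mu\,2^{-t'}\le 2^{-t}$, and you reach the same end point $\bigl|\sin(\pi\tilde\vartheta_i)-cm/|d_i|\bigr|\le\varepsilon/M$. Your clamping remark improves on the paper. The paper applies hypothesis (ii) and the mean value theorem at $|\tilde d_i|/M$ without checking that this value lies in $[1/\mu,1]$. Building the projection onto that interval into the Boolean circuit fixes this: the projection is non-expansive and leaves $|d_i|/M$ unchanged, so the input error stays at most $2^{-t'}$.

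The last step, however, is not finished, and neither of your remedies works as stated.

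\begin{itemize}
\item There is no unspecified constant. Multiplying the entrywise bound by $\alpha=1/(cm)$ gives exactly $\varepsilon/(cmM)$, and this is at most $\varepsilon$ only when $cmM\ge 1$. The paper's proof stops at the same inequality and simply declares the $(1/(cm),1,\varepsilon)$ conclusion. So this is a gap in the published argument too, not an idea you are missing.
\item Adding $\lceil\log_2(1/(cm))\rceil$ to $t$ only lowers the bound to $\varepsilon/M$, not to $\varepsilon$.
\item Reading $\varepsilon$ as an error on the normalized block is not what Theorem~\ref{diag_encoding} does. There the normalized error $c\varepsilon/M$ times $\alpha=M/c$ is exactly $\varepsilon$, in the sense of the block-encoding definition. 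Under your reading, the bound here would again be $\varepsilon/M$.
\end{itemize}

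What Theorem~\ref{diag_encoding} actually does is choose $t=\lceil\log_2(\alpha\pi/\varepsilon)\rceil+1$, with $\alpha$ the subnormalization. The analogous choice here is $t=\lceil\log_2(\pi/(cm\varepsilon))\rceil+1$, with $t'=t+\lceil\log_2(c\mu)\rceil$ as before. Then $2^{1-t}\le cm\varepsilon/\pi$, the amplitude error is at most $cm\varepsilon$, and the block-encoding error is at most $\varepsilon$. Alternatively, keep the stated $t$ and add the hypothesis $cmM\ge 1$.
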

\begin{proof}
The proof follows the same structure as Theorem~\ref{diag_encoding} 
and relies on ideas from~\cite{lin2022lecture}.

Let $\varepsilon$ be the required precision and let 
$t = \lceil\log_2(M\pi/\varepsilon)\rceil + 1$, 
$t' = t + \lceil\log_2(c\mu)\rceil$.

The oracle $O_D$,
$$
  O_D\ket{0}_{t'+1}\ket{i}
  = \ket{d_i^{\,\rm sgn}}
  \ket{\tfrac{|\tilde{d}_{i}|}{M}}_{t'}\ket{i},
$$
provides $t'$-digit approximations of the scaled entries $|d_i|/M$ 
such that $|\tilde{d}_i/M - d_i/M| \leq 2^{-t'}= 2^{-t}/(c\mu)$. 
The extra qubit $\ket{d_i^{\,\rm sgn}}$ encodes the sign of $d_i$ 
($\ket{0}$ for positive, $\ket{1}$ for negative).

Consider the function $g:\{0,1\}^{t'}\to\{0,1\}^{t}$ where $g(x)$ 
is the output of the Boolean circuit from hypothesis~(ii), implementing 
a $t$-digit approximation of $x\mapsto\frac{1}{\pi}\arcsin(k/x)$ for 
$x\in[1/\mu,1]$. It holds $|g(x)-\frac{1}{\pi}\arcsin(k/x)|\leq 2^{-t}$.

By the standard reversible simulation of Boolean 
circuits, hypothesis~(ii) implies that there exists 
a unitary $\mathcal{B}$, implementable by a quantum circuit of depth 
$\mathcal{O}({\rm poly}(t))$ and independent of $N$, that computes $g$  
on any superposition of inputs:
$$
  \mathcal{B}\ket{d_i^{\,\rm sgn}}
  \ket{\tfrac{|\tilde{d}_i|}{M}}_{t'}\ket{i}
  = \ket{d_i^{\,\rm sgn}}
  \ket{g\!\left(\tfrac{|\tilde{d}_i|}{M}\right)}_{t}
  \ket{0}_{\lceil\log_2(c\mu)\rceil}\ket{i}.
$$

The approximation error satisfies:
\begin{eqnarray}
  &&\left|g\!\left(\frac{|\tilde{d}_{i}|}{M}\right) 
    - \frac{1}{\pi}\arcsin\!\left(\frac{kM}{|d_{i}|}\right)\right|
    \nonumber\\
  &&\leq\left|g\!\left(\frac{|\tilde{d}_{i}|}{M}\right)
    -\frac{1}{\pi}\arcsin\!\left(\frac{kM}{|\tilde{d}_{i}|}\right)\right| 
    + \left|\frac{1}{\pi}\arcsin\!\left(\frac{kM}{|\tilde{d}_{i}|}\right)
    -\frac{1}{\pi}\arcsin\!\left(\frac{kM}{|d_{i}|}\right)\right|
    \nonumber\\
  &&\leq 2^{-t} + c\mu\left|\frac{|\tilde{d}_{i}|}{M} 
    - \frac{|d_{i}|}{M}\right| 
  \leq 2^{1-t} \leq \frac{\varepsilon}{\pi M},
  \label{eq:approx-invarcsin}
\end{eqnarray}
where in the second inequality we applied the mean value theorem to 
$\frac{1}{\pi}\arcsin(k/x)$ on $[1/\mu,1]$, using 
$\left|\frac{d}{dx}\frac{1}{\pi}\arcsin\!\left(\frac{k}{x}\right)\right|
\leq \frac{k}{\pi x^2\sqrt{1-k^2/x^2}} \leq c\mu$ 
for $x\in[1/\mu,1]$.

Denoting $\tilde{\omega}_i = g(|\tilde{d}_i|/M)$ and applying the 
circuit $U_\vartheta$ from Figure~\ref{fig:thetaf}, we obtain the state
\[
  (-1)^{d_i^{\,\rm sgn}}\!\left(\sin(\pi\tilde{\omega}_i)\ket{0}
  +\cos(\pi\tilde{\omega}_i)\ket{1}\right)
  \ket{d_i^{\,\rm sgn}}\ket{\tilde{\omega}_i}_t \ket{0}_{\lceil\log_2(c\mu)\rceil}\ket{i}.
\]

By uncomputing the middle register via $\mathcal{B}^\dagger$ followed 
by $O_D^\dagger$, the state becomes
\[
  (-1)^{d_i^{\,\rm sgn}}\!\left(\sin(\pi\tilde{\omega}_i)\ket{0}
  +\cos(\pi\tilde{\omega}_i)\ket{1}\right)\ket{0}_{t'+1}\ket{i}.
\]

From \eqref{eq:approx-invarcsin} and the inequality 
$|\sin(x)-\sin(y)|\leq|x-y|$, we obtain
\begin{eqnarray*}
  \left|(-1)^{d_i^{\,\rm sgn}}\sin(\pi\tilde{\omega}_i) 
    - (-1)^{d_i^{\,\rm sgn}}\frac{Mk}{|d_i|}\right|
  &=& \left|\sin(\pi\tilde{\omega}_i) - \frac{Mk}{|d_i|}\right|\\
  &\leq& \left|\pi\tilde{\omega}_i 
    - \arcsin\!\left(\frac{Mk}{|d_i|}\right)\right| 
  \leq \frac{\varepsilon}{M},
\end{eqnarray*}
which confirms the $(1/(c\, m),\,1,\,\varepsilon)$-block encoding of 
$D^{-1}$, since $M\,k = c \,M/\mu = c\,m$. More precisely, the encoding operator 
acts as
\begin{equation}\label{eq:oracleDinv}
  V_D\ket{0}\ket{i} \to 
  \left(\frac{Mk}{\tilde{d}_i}\ket{0}
  + \sqrt{1-\frac{M^2 k^2}{\tilde{d}_i^2}}\ket{1}\right)\ket{i}.
\end{equation}

The circuit in Figure~\ref{fig:fast_inv} requires two queries to 
$O_D$ and $O_D^\dagger$, one application each of $\mathcal{B}$, 
$\mathcal{B}^\dagger$, and $U_\vartheta$. By hypothesis~(ii) and 
the reversible simulation argument above, $\mathcal{B}$, 
$\mathcal{B}^\dagger$, and $U_\vartheta$ each have depth 
$\mathcal{O}({\rm poly}(t'))$.

The overall circuit 
depth is therefore $\mathcal{O}({\rm poly}(t'))$ plus two queries to $O_D$. 
After uncomputation, the $t'+1$ ancilla qubits are returned to 
$\ket{0}_{t'+1}$ and can be discarded.
\end{proof}

\begin{figure}
\begin{center}
$
\begin{myqcircuit}
\lstick{\ket{0}} & \qw&\qw	&\qw  &\qw& \qw &\qw&\qw&\qw &\multigate{1}{U_\vartheta} &\qw & \qw	& \qw &\qw &\qw\\
\lstick{\ket{0}}& {/}^{t'+1} \qw &\qw&\qw&\multigate{1}{O_D}     & \qw   &\multigate{1}{\mathcal{B}}   &\qw  &\qw   &\ghost{U_\vartheta}   & \qw& \multigate{1}{\mathcal{B}^\dagger} &\qw& \multigate{1}{O_D^\dagger}    &\qw\\
\lstick{\ket{i}}   &{/}^{n} \qw &\qw&\qw& \ghost{O_D} & \qw  &   \ghost{\mathcal{B}}   &\qw &\qw&\qw &\qw&\ghost{\mathcal{B}^\dagger} &\qw& \ghost{O_D^\dagger} & \qw
\end{myqcircuit}
$
\end{center}
\caption{Circuit for the fast inversion of a Diagonal Matrix}\label{fig:fast_inv}
\end{figure}

\begin{remark} \label{parag:oracles}
The oracle $O_D$ can be realized in several ways. 
If the entries  are stored in a quantum random-access memory (QRAM)~\cite{giovannetti2008quantum, kerenidis2017quantum}, 
each query to $O_D$ costs $\mathcal{O}({\rm polylog}(N))$ gates, possibly at the expense  of some classical preprocessing step to load the data.

Alternatively, one can realize the oracle with a state-preparation circuit. In several cases such circuits can be realized with depth $\mathcal{O}({\rm polylog}(N))$, for instance when $d_i=g(i)$ and $g(x)$ is a smooth bounded arithmetic function~\cite[Proposition 4]{li2023efficient}.

\end{remark}

\section{QBE for one-pair matrices}
\label{sec:building_blocks}
In this section we present the block encoding for one-pair  matrices. The block encoding is based on the factorization \eqref{eq:mainfact}, and the main ingredient is the block encoding of the lower triangular factor $L$.

\subsection{QBE of matrix $L$}  \label{sec:QBEL}
The lower triangular matrix $L$ in \eqref{eq:mainfact} is Toeplitz and can therefore be block-encoded as described in \cite{wan2021block}. However, given its additional structure, we can devise a suitable ad-hoc encoding. 
Recall that the $N\times N$ downshift matrix $$Z_N=\begin{bmatrix}
     &&& 1 \\ 1 &&& \\ & \ddots && \\ && 1 & 
\end{bmatrix}$$ 
can be encoded by the quantum circuit described in~\cite{camps2024explicit}
using a cascade of Toffoli gates. For example, the case $N=8$ is obtained with the following circuit 
\begin{center}
$
\begin{myqcircuit}
\lstick{q_{0}}	&	\ctrl{1}	&	\ctrl{1}	&	\gate{X}	&	\qw	\\
\lstick{q_{1}}	&	\ctrl{1}	&	\targ	&	\qw	&	\qw	\\
\lstick{q_{2}}	&	\targ	&	\qw	&	\qw	&	\qw	\\
\end{myqcircuit}
$
\end{center}
which performs the operation $x\rightarrow x+1\, {\rm mod}\, N$ on $x=\sum_{k=0}^{n-1}2^{j_k}$. For more details, see, e.g., \cite{camps2024explicit} and the {\tt leftshift} function in the QCLAB\footnote{https://github.com/QuantumComputingLab/qclab} toolbox for MATLAB.

In the computational basis, one can build each column of $L$ by applying the {\tt leftshift} operation a suitable number of times to a vector of length $2N$ of the form 
$$[\underbrace{1/N,\dots,1/N}_{N},\underbrace{0,\dots,0}_{N}]^T,$$
where the tail zeros are added to generate the lower triangular form. 
The above vector is readily obtained via a Hadamard transform on the first $n$ qubits.

To cyclically shift the columns we need to apply the following unitary block diagonal matrix
\begin{equation} \label{Z-block}
    Z_{\rm block}=\begin{bmatrix}
        I_{2N}& 0 & 0 & \cdots& 0\\
        0 &Z_{2N} & 0 & \cdots & 0\\
        0 & 0 & Z_{2N}^2 & \cdots  &0\\
        \vdots & & & \ddots & \vdots \\
        0 & 0 & 0 & \cdots & Z_{2N}^{N-1}
    \end{bmatrix}.
\end{equation}
Indeed once we have a circuit for~\eqref{Z-block} we can get the block encoding of $L$ with the following steps
\begin{equation} \label{UL}
U_L=  (H^{\otimes n}\otimes I^{\otimes (n+1)}) Z_{\rm block} (H^{\otimes n}\otimes I^{\otimes (n+1)}).     
\end{equation} 
To realize the circuit which describes~\eqref{Z-block} with a polylogarithmic cost in $N$, we note that $Z_{\rm block}$ can be obtained with only $n$ controlled leftshift operations. Indeed, for example, with $n=3, N=8$ we have

$$
Z_{\rm block}=\left(\begin{bmatrix}
    I&\\ &Z^4_{2N}
\end{bmatrix}\otimes I_4\right)\left(I_2\otimes \begin{bmatrix}
    I&\\ &Z^2_{2N}
\end{bmatrix}\otimes I_2\right)\left(I_4\otimes \begin{bmatrix} 
I&\\&Z_{2N}
\end{bmatrix}\right)
$$
and we know that 
$Z_{2k}^2=Z_k\otimes I_2$. Hence, 
$Z_{2N}^2= Z_N\otimes I_2$, and $Z_{2N}^4=(Z_N^2 \otimes I_2)=(Z_{N/2}\otimes I_4).$ 
The extended form of~\eqref{UL} is as follows
$$
U_L =
\begin{pNiceArray}{cccc|ccc}
    \frac{1}{N} & 0 & 0 & 0 & * &\cdots& * \\
    \frac{1}{N} & \ddots & 0 & 0 & * &\vdots & * \\
    \vdots & \vdots& \frac{1}{N} & 0 & \vdots &\ddots& \vdots \\
    \frac{1}{N} & \frac{1}{N} & \cdots & \frac{1}{N} & * &\cdots & * \\
\hline
    * & * &  \cdots & * &* &\cdots & * \\
    \vdots &\vdots& \ddots & \vdots & \vdots&\ddots&\vdots \\
    * & * &  \cdots & * &* &\cdots & * \\
\end{pNiceArray}
$$
which has the matrix $\frac{1}{N} L$ in the top-left corner. Note that $U_L$ has size $N^2$ while matrix $L$ has size $N$.

\begin{figure}
\centering
\begin{tikzpicture}
\node[inner sep=0] (circuit) {$
\begin{myqcircuit}
\lstick{q_0} & \qw & \gate{H}& \qw  & \qw   & \qw& \qw  & \cds{2}{\cdots}& \qw  & \ctrl{4} & \qw  & \gate{H}& \qw\\
\vdots &&&&&&& &&&&&\\
\lstick{q_{n-2}}& \qw & \gate{H}& \qw & \qw & \qw  & \ctrl{2} & \qw & \qw & \qw& \qw & \gate{H}& \qw\\
\lstick{q_{n-1}}& \qw & \gate{H}& \qw & \ctrl{1} & \qw   & \qw& \qw & \qw & \qw& \qw  & \gate{H}& \qw\\
\lstick{q_n}& \qw & \qw & \qw  &\multigate{4}{Z_{2^{n+1}}}  & \qw  &\multigate{3}{Z_{2^{n}}} & \qw & \qw  & \multigate{1}{Z_{2^{2}}}& \qw & \qw & \qw\\
\lstick{q_{n+1}}& \qw & \qw & \qw &\ghost{Z_{2^{n+1}}}  & \qw  &\ghost{Z_{2^{n}}} &  \cds{2}{\cdots} & \qw   &\ghost{Z_{2^{2}}}& \qw& \qw & \qw \\
\vdots &&&& &&&& &&&&\\
\lstick{q_{2n-1}} & \qw& \qw  & \qw &\ghost{Z_{2^{n+1}}}  & \qw &\ghost{Z_{2^{n}}} & \qw & \qw & \qw &\qw& \qw & \qw\\
\lstick{q_{2n}} & \qw& \qw  & \qw &\ghost{Z_{2^{n+1}}}  & \qw & \qw & \qw & \qw & \qw & \qw& \qw & \qw
\end{myqcircuit}
$};
\draw[dashed, thick, rounded corners=3pt]
  ([shift={(3em, 0.6em)}] circuit.north west) rectangle ([shift={(-2.5em,-0.6em)}] circuit.south east);
\end{tikzpicture}
\caption{Circuit realizing $U_L$, the $(N, \log(N)+1, 0)-$ block encoding of $L$.}
\label{fig:LCirc}
\end{figure}
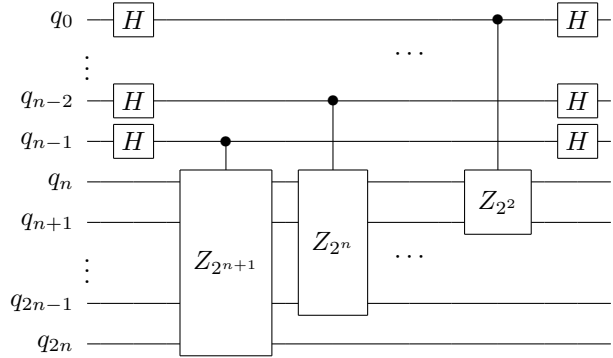

\begin{theorem}\label{theo:Ltheorem}
    Given $L$,an $N\times N$ lower triangular matrix of all 1's, with $N=2^n$, the circuit described in  Figure~\ref{fig:LCirc} produces a $(N, \log(N)+1, 0)-$block encoding of $L$. The circuit uses $2\log(N)+1$ qubits and has a depth of $\mathcal{O}({\rm polylog}(N))$.
\end{theorem}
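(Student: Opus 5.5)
We verify directly that the top-left $N\times N$ block of $U_L$ in \eqref{UL} equals $\frac1N L$, and then count qubits and depth.

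\emph{Register layout.} Split the $2n+1$ qubits into an $n$-qubit register $q_0,\dots,q_{n-1}$ (the ancilla, later projected on $\ket{0^n}$) and an $(n+1)$-qubit register $q_n,\dots,q_{2n}$ on which $Z_{2N}$ acts. The signal is carried by the last $n$ of these, so $q_n$ is the single extra ancilla. This gives $n+1=\log(N)+1$ ancillas and $2\log(N)+1$ qubits in total, and "top-left block" means projecting $q_0,\dots,q_n$ onto $\ket{0}$.

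\emph{Step 1: controlled shifts realize $Z_{\rm block}$.} We show that the cascade in Figure~\ref{fig:LCirc} equals $Z_{\rm block}$ in \eqref{Z-block}, i.e. $\sum_{k=0}^{N-1}\ket{k}\bra{k}\otimes Z_{2N}^k$. Write $k=\sum_j k_j 2^{j}$ in binary, with $q_{n-1-j}$ holding bit $k_j$ (or the mirrored convention, fixed once). Since the powers of $Z_{2N}$ commute, $Z_{2N}^k=\prod_j (Z_{2N}^{2^j})^{k_j}$, so $Z_{\rm block}$ is the product of $n$ singly-controlled gates $Z_{2N}^{2^j}$. Iterating $Z_{2m}^2=Z_m\otimes I_2$ gives $Z_{2N}^{2^j}=Z_{2^{n+1-j}}\otimes I_{2^j}$. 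Each such gate is therefore a controlled leftshift acting on only the top $n+1-j$ target qubits, exactly as drawn. This generalizes the displayed $n=3$ identity, and is the step needing the most care: the qubit ordering must make the untouched $I_{2^j}$ factor correspond to the least significant qubits.

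\emph{Step 2: compute the block.} Take the input $\ket{0^n}\ket{0}\ket{j}$ with $j<N$. The Hadamards produce $\frac{1}{\sqrt N}\sum_k\ket{k}\ket{j}$, where $\ket{j}$ is viewed in $\mathbb C^{2N}$ with its top bit $0$. Then $Z_{\rm block}$ maps this to $\frac1{\sqrt N}\sum_k\ket{k}\ket{j+k \bmod 2N}$. Since $j+k\le 2N-2$, no wraparound occurs. Projecting the first register back through $H^{\otimes n}$ onto $\ket{0^n}$ contributes a further factor $\frac1{\sqrt N}$. The resulting amplitude on $\ket{0^n}\ket{i}$ is $\frac1N\#\{k\in[0,N-1]: j+k=i\}$, and requiring the top bit to be $0$ forces $i<N$. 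This amplitude equals $\frac1N$ exactly when $i\ge j$ and $0$ otherwise, i.e. $\frac1N L_{ij}$. Hence the block is exactly $L/N$, and since $\|L\|_2\le N$ this is a valid $(N,\log N+1,0)$-block encoding with no approximation error.

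\emph{Step 3: depth.} There are $2n$ Hadamards in constant depth and $n$ controlled leftshifts. The leftshift on $m$ qubits is the Toffoli cascade of \cite{camps2024explicit}, with $m$ multi-controlled $X$ gates of at most $m$ controls, plus one extra control for the controlled version. Each multi-controlled $X$ decomposes into $\mathcal O(m)$ (or $\mathcal O(m^2)$ without ancillas) elementary gates. The total depth is therefore $\mathcal O(n\cdot n^2)=\mathcal O(\log^3 N)$, which is polylogarithmic.
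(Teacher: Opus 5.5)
Your proof is correct and follows the paper's argument. Both apply the Hadamards, $Z_{\rm block}$, and the Hadamards again to $\ket{0^n}\ket{0}\ket{j}$, project onto $\ket{0^n}\ket{0}\ket{i}$ to obtain $\frac1N$ exactly when $i\ge j$, and bound the depth by counting the $n$ controlled leftshifts. You go slightly further than the paper by proving, via $Z_{2N}^{2^j}=Z_{2^{n+1-j}}\otimes I_{2^j}$, that the cascade realizes $Z_{\rm block}$ for general $n$ (the paper only shows $n=3$), and by stating the no-wraparound bound $j+k\le 2N-2$ explicitly (the paper uses it implicitly); the only slip is that with $\mathcal{O}(m^2)$-gate multi-controlled $X$ decompositions your count gives $\mathcal{O}(\log^4 N)$ rather than $\mathcal{O}(\log^3 N)$, which is still polylogarithmic.
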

\begin{proof}
Starting from the initial configuration  $\ket{\psi_0}=\ket{0}^{\otimes n}\ket{0}\ket{j},$ $0\le j\le N-1$ we transition to state 
$\ket{\psi_1}=\frac{1}{\sqrt{N}}\left( \sum_{k=0}^{N-1}\ket{k} \right)\ket{0}\ket{j}$ after the application of the Hadamard gates to the first $n$ qubits.
The controlled operations are represented by matrix $Z_{\rm block}$. After its application we get 
$$ \ket{\psi_2}=\frac{1}{\sqrt{N}}\left( \sum_{k=0}^{N-1}\ket{k} \ket{k+j}\right).
$$
The effect of the last sequence of Hadamard gates on the first $n$ qubits is such that
$$
\ket{\psi_3}= \frac{1}{\sqrt{N}}\sum_{k=0}^{N-1} H^{\otimes n}\ket{k} \ket{k+j}.
$$
Projecting along the direction of the $\ket{0}^{\otimes n}\ket{0}\ket{i}$, with $i=j+1, \ldots, N-1$, we get $1/N$, which is $L_{ij}/N$, whereas we obtain $0$ if $i<j$:
$$
\bra{i} \bra{0} \bra{0}^{\otimes n}\ket{\psi_3}= \frac{1}{N} \sum_{k=0}^{N-1}\bra{i} \bra{0}\ket{k+j}.
$$
Indeed $\bra{0}^{\otimes n} H^{\otimes n}\ket{k}= \frac{1}{\sqrt{N}}$
and $\sum_{k=0}^{N-1}\bra{i} \bra{0}\ket{k+j}=1$ if $i\ge j$, and $0$ if $i<j$.
This proves that the circuit in Figure~\ref{fig:LCirc} is an $(N, \log(N)+1, 0)-$block encoding of $L$. 
The depth of the circuit is polylogarithmic in $N$ since $Z_{\rm block}$ can be obtained with $n$ controlled leftshift operations. Each controlled $Z_{2^k}$ gate  costs $\mathcal{O}(k)$ operations.
    
\end{proof} 
\subsection{QBE of matrix $D_u$} 
From Proposition \ref{prop:decomoposition} we know that $D_u$ is a diagonal matrix whose diagonal entries are the elements of the vector $u$. The block encoding of $D_u$ is obtained by applying Theorem \ref{diag_encoding} to the vector $u$, which produces a $(\max_i{|u_i|}/c,1,\varepsilon_u)$  block encoding of $D_u$, where $c=\frac{\sqrt{\pi^2-1}}{\pi}$.

Note that we can reduce the error of the block encoding of $D_u$ by using more digits in the representation of $u$. As underlined in Theorem~\ref{diag_encoding}, the $t+1$ qubits used for representing each entry of $u$ with precision $2^{-t}$ can then be released at the end of the computation. The cost of this encoding is polynomial in $t$ plus two oracle calls to $O_D$.

\subsection{QBE of matrix $\Delta_z$} 
\label{sec:deltaz}

The matrix $ \Delta_z $ is a diagonal matrix where each diagonal entry corresponds to the difference between consecutive elements of the vector $ z $, except for the first entry, which is simply $ z_0 $. It can be expressed as follows:

$$\Delta_z = \begin{bmatrix}  z_0 & & & &\\ & z_1 - z_0 & & &\\ & & \ddots & &\\ & & & z_{i} - z_{i-1} & \\ & & & & \ddots & \\ & & & & & z_{N-1} - z_{N-2} \end{bmatrix}, \quad z_i = \frac{v_i}{u_i}.$$

Since the elements of $ z $ are given by $ v_i / u_i $, we can obtain the block encoding of $ \Delta_z $ using the two vectors $ u $ and $ v $. 
There are several ways of encoding $\Delta_z$. For example, one can encode directly the diagonal matrix with entries $z_i-z_{i+1}=(u_iv_{i+1}-u_{i+1}v_i)/(v_i v_{i+1})$; alternatively, one may take as input the vector $z$ and encode $\Delta_z$ from $D_z$ and a shifted version of $D_z$.

Here, we assume that in input we only have the two generators $u$ and $v$ already stored in a quantum system, for example in a QRAM, ensuring the possibility of using the data multiple times. This approach keeps all computations entirely within the quantum domain. We proceed as follows.

The first step involves block-encoding the diagonal matrix $\text{diag}(z_0, z_1, \ldots, z_{N-1})$ as the product  of the two diagonal matrices,  $ D_v $ encoded exploiting Theorem~\ref{diag_encoding} and $ D_u^{-1} $ encoded exploiting Theorem~\ref{teo:fast_inversion}. 

Next, we need to block-encode $\text{diag}(z_{\downarrow})$, where $ z_{\downarrow} = [0, z_0, \ldots, z_{N-2}]^T $ is a shifted version of $ z $ that includes a zero as the first element and omits $ z_{N-1} $. One of the possibility to encode the diagonal matrix $\text{diag}(z_{\downarrow})$ is to block-encode $ D_{v_{\downarrow}} = \text{diag}(0, v_0, v_1, \ldots, \\v_{N-2}) $ and $ D_{u_{\downarrow}}^{-1} = (\text{diag}(u_{N-1}, u_0, u_1, \ldots, u_{N-2}))^{-1} $, and multiply the two matrices.

Finally, we combine the block encodings of $\text{diag}(z_0, z_1, \ldots, z_{N-1})$ and $\text{diag}(0, z_0, z_1,\\ \ldots, z_{N-2})$ by introducing an ancillary qubit and performing an LCU operation to compute the difference between the two block encodings.

\begin{figure}
\centering
$
\begin{myqcircuit}
\lstick{\ket{0}}	& \gate{H}          & \ctrlo{2}       & \qw     & \ctrlo{1}        & \qw	& \ctrlo{2} 	               & \qw& \qw	&  \qw      &   \qw  & \qw & \qw & \ctrl{2}       & \qw     & \ctrl{1}        & \qw	& \ctrl{2} 	               & \qw	&  \gate{H}    &   \qw  & \gate{X} & \qw  \\
\lstick{\ket{0}}  & \qw	& \qw	                & \qw      & \qswap		   & \qw	& \qw & \qw	&  \qw 	&  \qw      &   \qw  & \qw                 & \qw & \qw	                & \qw      & \qswap		   & \qw	& \qw & \qw	&  \qw      &   \qw  & \qw                 & \qw \\
\lstick{\ket{0}}   & \qw	& \multigate{1}{U_{D_{v}}}    & \qw	   & \qswap \qwx   & \qw	& \multigate{1}{{U_{D_{u}^{-1}}}}   & \qw	&  \qw   &   \qw  &\qw  & \qw & \qw	& \multigate{1}{U_{D{v}^{\downarrow}}}    & \qw	   & \qswap \qwx   & \qw	& \multigate{1}{{U_{D_{u^{\downarrow}}^{-1}}}}   & \qw	&  \qw   &   \qw  &\qw  & \qw  \\
\lstick{\ket{i}}	  &  \qw {/}^{n} & \ghost{{U_{D{v}}}}           & \qw	   & \qw           & \qw	& \ghost{{U_{D_{u}^{-1}}}} 	       & \qw	&  \qw         &   \qw  & \qw         & \qw & \qw& \ghost{{U_{D_{v^{\downarrow}}}}}           & \qw	   & \qw           & \qw	& \ghost{{U_{D_{u^{\downarrow}}^{-1v}}}} 	       & \qw	&  \qw         &   \qw  & \qw         & \qw \\
\end{myqcircuit}
$
\caption{Circuit for the block encoding of the diagonal matrix $\Delta_z$.}\label{fig:BEDz}
\end{figure}

The circuit realizing the block encoding of $\Delta_z$ is depicted in Figure~\ref{fig:BEDz}. We have the following result.
\begin{theorem} \label{lemma:Dz}
Let $u$ and $v$ be two $n$-qubit vectors. Let $M_v=\max_{i=0,\ldots N-1} |v_i|$, $M_u=\max_{i=0,\ldots N-1} |u_i|$, $m_u=\min_{i=0, \ldots, N-1} |u_i|\neq 0$ and $c= \frac{\sqrt{\pi^2-1}}{\pi}$. \\
Then the circuit in Figure~\ref{fig:BEDz} realizes a $\left(2\frac{M_v}{c^2 m_u}, 3, 2(\frac{1}{m_u} \epsilon_v+ M_v\epsilon_u)\right)-$block encoding of $\Delta_z$ defined in Proposition~\ref{prop:decomoposition}, where $\epsilon_u$ and $\epsilon_v$ are such that
$$
\left |v_i -\frac{M_v}{c}U_{D_v}(i,i) \right|\le \epsilon_v, \ \ \left |\frac{1}{u_i}-\frac{c}{m_u}U_{D_u^{-1}}(i,i) \right|\le \epsilon_u.$$ The circuit requires $n+3$ qubits and has depth $\mathcal{O}({\rm{poly}}(t))$ plus 8 calls to the diagonal oracles $O_D$ where 
$$t=\max\left\{ \left\lceil \log_2\left(\frac{M_v \pi}{c \epsilon_v}\right) \right\rceil , \left\lceil \log_2\left( \frac{\pi \, M_u}{\epsilon_u }\right) \right\rceil +\left\lceil \log_2\left( \frac{c\,M_u}{ m_u}\right) \right\rceil\right\}+1.
$$
\end{theorem}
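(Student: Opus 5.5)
The plan is to assemble the block encoding of $\Delta_z$ from the earlier building blocks, following the circuit in Figure~\ref{fig:BEDz}, and to track the scaling factor, ancilla count and error through Lemma~\ref{lemma:prod} and the LCU difference construction.

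\textbf{Step 1: the diagonal factors.} By Theorem~\ref{diag_encoding} applied to $v$, $U_{D_v}$ is an $(M_v/c,1,\epsilon_v)$-block encoding of $D_v$. The choice of $t$ is the first term in the maximum, and $t$ is chosen so that the stated entrywise bound holds. By Theorem~\ref{teo:fast_inversion} applied to $u$, $U_{D_u^{-1}}$ is a $(1/(c\,m_u),1,\epsilon_u)$-block encoding of $D_u^{-1}$. Here the required precision is $t'=t+\lceil\log_2(c\mu)\rceil$ with $\mu=M_u/m_u$, which gives the second term in the maximum.

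The shifted matrices $D_{v_\downarrow}$ and $D_{u_\downarrow}^{-1}$ are encoded by the same theorems with the same constants. Their entries are a subset of $\{0,v_0,\dots\}$ and of the $u_i$, so $M$, $m$ and the error bounds are unchanged. The only change is that the oracle is queried at index $i-1$ (or $i-1 \bmod N$, with the zero entry for $v$). I take this as an oracle assumption, possibly realized by composing $O_D$ with a leftshift on the index register.

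Since all factors are diagonal, the operator-norm error of a diagonal block encoding is the maximum entrywise error. The entrywise bounds in the statement therefore suffice.

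\textbf{Step 2: the products.} By Lemma~\ref{lemma:prod}, $U_{D_v}$ followed by $U_{D_u^{-1}}$, with the SWAP shown, gives a block encoding of $D_z=D_vD_u^{-1}$. Its scaling factor is $\frac{M_v}{c}\cdot\frac{1}{c m_u}=\frac{M_v}{c^2 m_u}$ and it uses $2$ ancillas. Its error is $\frac{M_v}{c}\epsilon_u+\frac{1}{c m_u}\epsilon_v$, or $M_v\epsilon_u+\epsilon_v/m_u$ in the normalization used in the statement, where the $c$'s are absorbed into the definition of $\epsilon_u,\epsilon_v$ via the entrywise bounds. The same computation applies to $\mathrm{diag}(z_\downarrow)$.

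\textbf{Step 3: the difference.} The LCU difference circuit of Figure~\ref{fig:LCU_diff} combines the two encodings, which share the same scaling factor $\alpha=\beta=\frac{M_v}{c^2m_u}$. It adds one control qubit, giving $3$ ancillas in total. The top-left block is then $\frac{1}{2\alpha}(D_z-\mathrm{diag}(z_\downarrow))=\frac{1}{2\alpha}\Delta_z$, so the scaling factor is $2\frac{M_v}{c^2m_u}$.

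For the error, I would argue directly rather than quote the generic $(2\alpha\beta,\dots)$ formula, which is stated for different normalizations. The error is the average of the two branch errors, rescaled by $2\alpha/\alpha$. This yields $2(\epsilon_v/m_u+M_v\epsilon_u)$. Making this constant bookkeeping consistent with the paper's conventions is where I expect the main friction; the mathematical content is just the triangle inequality.

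\textbf{Step 4: resources.} Each of the four diagonal encodings uses two oracle queries, giving $8$ calls in total. Each has depth $\mathcal{O}(\mathrm{poly}(t))$, and the SWAPs, Hadamards and controls add $\mathcal{O}(1)$. The $t+1$ work qubits are uncomputed and reusable, so the persistent register is $n+3$ qubits.
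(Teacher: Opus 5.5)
Your route is the paper's: build $D_z=D_vD_u^{-1}$ and $\mathrm{diag}(z_\downarrow)$ as products of the encodings from Theorems~\ref{diag_encoding} and~\ref{teo:fast_inversion}, then subtract them with the LCU circuit. The paper traces the state through Figure~\ref{fig:BEDz} rather than citing Lemma~\ref{lemma:prod}, but your scaling factor $2M_v/(c^2m_u)$, ancilla count, choice of $t$, treatment of the $i=0$ entry, and count of $4\times 2=8$ oracle calls all agree with it.

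The step that does not go through as written is the error constant in Step~2. Lemma~\ref{lemma:prod} gives exactly what you computed, $\frac{M_v}{c}\epsilon_u+\frac{1}{c\,m_u}\epsilon_v=\frac{1}{c}\bigl(M_v\epsilon_u+\epsilon_v/m_u\bigr)$. The factor $1/c>1$ cannot be ``absorbed'' into $\epsilon_u,\epsilon_v$, because these are fixed by the entrywise hypotheses and the conclusion is stated in terms of them. The $\frac{c}{m_u}$ printed in the second hypothesis is a typo for $\frac{1}{c\,m_u}$, consistent with Theorem~\ref{teo:fast_inversion} and with the paper's proof. Read literally, it would change the scaling factor to $M_v/m_u$, so it leaves no room to hide the $c$'s either. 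As written, your argument proves $\frac{2}{c}(\epsilon_v/m_u+M_v\epsilon_u)$, not the stated bound.

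The missing ingredient is to bound the product error using the actual sizes of the factors rather than their subnormalizations. The paper does this entrywise:
\[
\begin{aligned}
\Bigl|\frac{v_i}{u_i}-\frac{M_v}{c^2m_u}U_{D_v}(i,i)\,U_{D_u^{-1}}(i,i)\Bigr|
&\le \Bigl|\frac{1}{u_i}\Bigr|\,\Bigl|v_i-\frac{M_v}{c}U_{D_v}(i,i)\Bigr|\\
&\quad+\frac{M_v}{c}\bigl|U_{D_v}(i,i)\bigr|\,\Bigl|\frac{1}{u_i}-\frac{1}{c\,m_u}U_{D_u^{-1}}(i,i)\Bigr|.
\end{aligned}
\]
Here $|1/u_i|\le 1/m_u=\|D_u^{-1}\|_2$ replaces $\beta=1/(c\,m_u)$. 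Likewise $\frac{M_v}{c}|U_{D_v}(i,i)|\le |v_i|+\epsilon_v\le M_v+\epsilon_v$ replaces $\alpha=M_v/c$. This gives $\epsilon_v/m_u+M_v\epsilon_u$ per branch, up to a second-order term $\epsilon_u\epsilon_v$. The paper silently drops that term by bounding $\frac{M_v}{c}|U_{D_v}(i,i)|$ by $M_v$. Your Step~3 then yields $2(\epsilon_v/m_u+M_v\epsilon_u)$, since the difference error is just the sum of the two branch errors.
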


\begin{proof}    
Here we prove the correctness of the circuit in Figure~\ref{fig:BEDz}, in particular, we want to show that starting with the initial configuration 
$\ket{\phi_0}=\ket{0}\ket{0}\ket{0}\ket{i}$ we end up with a state $\ket{\phi_F}= U_{\Delta_z} \ket{000}\ket{i}$ such that $\bra{000}\bra{i}\ket{\phi_F}=\lambda\left(z_i-z_{i-1}\right),$ for a known $\lambda$ and $i>0$, whereas for $i=0$ we expect $\bra{000}\bra{0}\ket{\phi_F}=\lambda\left(z_0-0\right)$.  In the following, as before let $c=\frac{\sqrt{\pi^2-1}}{\pi}$.
After the application of the Hadamard gate on the first qubit we have the state
$$
  \frac{1}{\sqrt{2}}\ket{0}\ket{0}\ket{0}\ket{i}+\frac{1}{\sqrt{2}}\ket{1}\ket{0}\ket{0}\ket{i}.\\
$$
Then on the branch where the first qubit is zero we apply $U_{D_v}$ as described by equation~\eqref{eq:oracleD}. We get
$$
   \frac{1}{\sqrt{2}}\ket{0}\ket{0}\left( \frac{c\,v_i}{M_v}\ket{0}+\sqrt{1-\frac{c^2 v_i^2}{M_v^2}}\ket{1}\right)\ket{i}+\frac{1}{\sqrt{2}}\ket{1}\ket{0}\ket{0}\ket{i}.
$$ 
After the controlled swap between the second and the third qubit,  and the conditional application of $U_{D_u^{-1}}$ according to~\eqref{eq:oracleDinv}, we get

  \begin{eqnarray*}
       &\frac{1}{\sqrt{2}}\ket{0}\left( \frac{c\,v_i}{M_v}\ket{0}+\sqrt{1-\frac{c^2 v_i^2}{M_v^2}}\ket{1}\right)\left(\frac{c\,m_u}{u_i}\ket{0} +
       \sqrt{1-\frac{c^2\,m_u^2}{u_i^2}} \ket{1} \right)\ket{i}+\\ &+\frac{1}{\sqrt{2}}\ket{1}\ket{0}\ket{0}\ket{i}.
  \end{eqnarray*}
  
Similarly on the branch where the first qubit is 1 we apply $U_{D_{v_\downarrow}}$ followed by a controlled swap and  $U_{D_{u_\downarrow}^{-1}}.$ At the end we have
  \begin{eqnarray*}
 &\frac{1}{\sqrt{2}} \ket{0}\left( \frac{c^2 m_u v_i}{M_v u_i} \ket{00}\ket{i}+ \sum_{k=1}^3 \lambda_k \ket{k}\ket{i}\right)+\\
 & \frac{1}{\sqrt{2}} \ket{1}\left( \frac{c^2 m_u v_{i-1}}{M_v u_{i-1}} \ket{00}\ket{i}+ \sum_{k=1}^3 \hat\lambda_k \ket{k}\ket{i}\right),
  \end{eqnarray*}
  where $\lambda_k$ and $\hat \lambda_k$, for $k=1,\, 2,\, 3$ are the combinations of the different coefficients of $\ket{0}$ and $\ket{1}$ in the second and third register; for example it holds $\lambda_1= \frac{c v_i}{M_v} \sqrt{1- \frac{c^2\,m_u^2}{u_i^2}}.$ Note that, for $i=0$, we have $v_{i-1}/u_{i-1}=0$.

Applying  $H$ and $X$ on the first qubit we get
\begin{eqnarray*}
&\ket{\phi_F}=\frac{1}{2}(\ket{0}+\ket{1})\left( \frac{c^2 m_u v_i}{M_v u_i} \ket{00}\ket{i}+ \sum_{k=1}^3 \lambda_k \ket{k}\ket{i}\right)- \\
&-
\frac{1}{2}(\ket{0}-\ket{1})\left( \frac{c^2 m_u v_{i-1}}{M_v u_{i-1}} \ket{00}\ket{i}+ \sum_{k=1}^3 \hat\lambda_k \ket{k}\ket{i}\right)=\\
&= \frac{c^2}{2} \frac{m_u}{M_v}\left( \frac{v_i}{u_i} - \frac{v_{i-1}}{u_{i-1}} \right)\ket{000}\ket{i}
+\sum_{k=1}^8 \delta_k \ket{k}\ket{i},
 \end{eqnarray*}
where $\delta_k=\lambda_k-\hat \lambda_k$, for $k=1,\,2,\,3$, $\delta_4= \frac{c^2}{2} \frac{m_u}{M_v}\left( \frac{v_i}{u_i} + \frac{v_{i-1}}{u_{i-1}} \right)$ and $\delta_{k}=\lambda_{(k-1)/4}+\hat \lambda_{(k-1)/4}$, for $k=5,\, 6,\, 7$.   

So we get that 
$$
\bra{000}\langle i|U_{\Delta_z}\phi_0\rangle= \frac{c^2}{2}\frac{m_u}{M_v}\left(z_i-z_{i-1}\right), \quad i>0,
$$ 
and  $\bra{000}\langle 0 \ket{\phi_F}= \frac{c^2}{2}\frac{m_u}{M_v}z_0.$

This proves that the circuit in Figure~\ref{fig:BEDz} realizes a  block encoding of matrix $\Delta_z$ with a scaling factor $2 \frac{M_v}{c^2m_u}$. 
The error for the block encoding results from the following chain of inequalities. Given that $|v_i-\frac{M_v}{c} U_{D_v}(i,i)|\le \epsilon_v$ and $| \frac{1}{u_i}-\frac{1}{c\,m_u} U_{D_u^{-1}}(i,i)|\le \epsilon_u,$ we have
\begin{eqnarray*}
  && \left|\frac{v_i}{u_i}-\frac{M_v}{c^2 m_u} U_{D_v}(i,i)U_{D_u^{-1}}(i,i)\right|\le \\
   &\le&\left|\frac{1}{u_i}\right|\left|v_i- \frac{M_v}{c} U_{D_v}(i,i) \right|+\frac{M_v}{c} |U_{D_v}(i, i)|\left| \frac{1}{u_i}- \frac{1}{c\, m_u}U_{D_u^{-1}}(i,i)\right| \\
   &\le& \frac{1}{m_u} \epsilon_v+ M_v\epsilon_u
\end{eqnarray*}
We can achieve a similar bound for $z_{i-1}=\frac{v_{i-1}}{u_{i-1}}$. Combining the two encodings we get that the error is bounded by $2(\frac{1}{m_u} \epsilon_v+ M_v\epsilon_u)$ as claimed.
The depth of the circuit is obtained using Theorem~\ref{diag_encoding} and~\ref{teo:fast_inversion} for $v$ and $u$ respectively.
 
\end{proof}

\subsection{The final block encoding}
\label{sec:finalBE}
Now we present the final block-encoding. We obtain it by combining the block-encodings of the matrices $U_{D_u}$, $U_L$ (and its transpose), and $U_{\Delta_z}$ introduced in Section~\ref{sec:building_blocks}, following the decomposition from Proposition~\ref{prop:decomoposition}. From here, we can get the whole block-encoding of the one-pair semiseparable matrix $S$ with the circuit in Figure~\ref{fig:BE-S}, shown in more detail for $N=8$ in Figure~\ref{fig:BE-Sexample}.

\begin{figure}
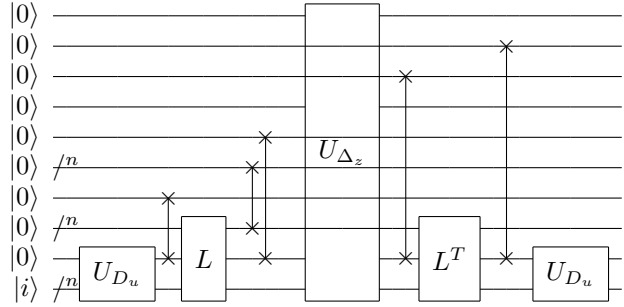

\begin{center}
$
\begin{myqcircuit}
 \lstick{\ket{0}} & \qw & \qw & \qw &  \qw &\qw &\qw&\qw & \qw &\qw &  \multigate{9}{U_{\Delta_z}} & \qw & \qw &\qw &\qw &\qw &\qw &\qw &\qw &\\
  \lstick{\ket{0}} & \qw & \qw & \qw &  \qw &\qw &\qw&\qw & \qw &\qw &  \qw & \qw & \qw &\qw & \qw & \qswap &\qw &\qw &\qw &\\
    \lstick{\ket{0}} & \qw & \qw & \qw &  \qw &\qw &\qw&\qw & \qw &\qw &  \qw & \qw & \qswap &\qw & \qw & \qw \qwx &\qw &\qw &\qw &\\
 \lstick{\ket{0}}& \qw  & \qw & \qw & \qw &\qw &\qw&\qw &\qw &\qw & \ghost{U_{\Delta_z}}& \qw &\qw\qwx  &\qw &\qw &\qwx \qw &\qw &\qw &\qw &\\
 \lstick{\ket{0}}& \qw &  \qw &\qw & \qw &\qw & \qw& \qswap&\qw &\qw &\qw &\qw  &\qw \qwx &\qw &\qw &\qwx \qw &\qw &\qw &\qw &\\
  \lstick{\ket{0}}& {/}^n\qw &  \qw &\qw & \qw &\qw & \qswap& \qw\qwx&\qw &\qw &\qw &\qw  &\qw \qwx&\qw &\qw &\qwx \qw &\qw &\qw &\qw &\\
 \lstick{\ket{0}} & \qw & \qw & \qswap &  \qw &\qw &\qw \qwx& \qw \qwx & \qw &\qw &  \qw & \qw  & \qw\qwx  &\qw &\qw & \qw\qwx &\qw &\qw &\qw &\\
  \lstick{\ket{0}}& {/}^n \qw &  \qw &\qw \qwx & \multigate{2}{L} & \qw&\qswap \qwx & \qw\qwx&\qw &\qw  &\qw &\qw  &\qw\qwx &\multigate{2}{L^T} &\qw &\qwx \qw &\qw &\qw &\qw &\\
 \lstick{\ket{0}}& \qw  & \multigate{1}{U_{D_u}} &\qswap \qwx  & \ghost{L}&\qw&\qw &  \qswap \qwx & \qw &\qw & \ghost{U_{\Delta_z}} &\qw  &\qswap \qwx &\ghost{L^T}&\qw & \qswap \qwx&\qw &\multigate{1}{U_{D_u}} &\qw \\
 \lstick{\ket{i}}& {/}^n \qw  & \ghost{U_{D_u}}  & \qw & \ghost{L}&\qw &\qw&\qw &\qw &\qw & \ghost{U_{\Delta_z}}&\qw &\qw &\ghost{L^T}&\qw &\qw &\qw &\ghost{U_{D_u}}  & \qw &\\
\end{myqcircuit}
$

\end{center}
\caption{Final scheme of the circuit for the block encoding of $S$. The swap gates are necessary to correctly perform the multiplication of the different factors.}
\label{fig:BE-S}
\end{figure}

\begin{figure}
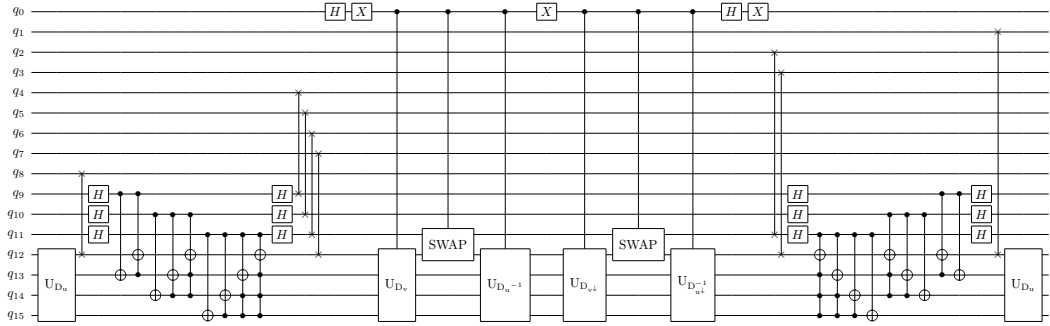

\begin{center}
\scalebox{0.5}{
$
\begin{myqcircuit}
\lstick{q_{0}}  &       \qw     &       \qw     &       \qw     &       \qw     &       \qw     &       \qw     &       \qw     &       \qw     &       \qw     &       \qw     &       \qw     &       \qw     & \qw      &       \qw     &       \qw     &       \qw     &       \qw     &       \gate{H}        &       \gate{X}        &       \ctrl{12}       &       \ctrl{11}       &       \ctrl{12}       &       \gate{X}  &\ctrl{12}       &       \ctrl{11}       &       \ctrl{12}       &       \gate{H}        &       \gate{X}        &       \qw     &       \qw     &       \qw     &       \qw     &       \qw     &       \qw     & \qw      &       \qw     &       \qw     &       \qw     &       \qw     &       \qw     &       \qw     &       \qw     &       \qw     &       \qw     \\
\lstick{q_{1}}  &       \qw     &       \qw     &       \qw     &       \qw     &       \qw     &       \qw     &       \qw     &       \qw     &       \qw     &       \qw     &       \qw     &       \qw     & \qw      &       \qw     &       \qw     &       \qw     &       \qw     &       \qw     &       \qw     &       \qw     &       \qw     &       \qw     &       \qw     &       \qw     &       \qw     &       \qw&       \qw     &       \qw     &       \qw     &       \qw     &       \qw     &       \qw     &       \qw     &       \qw     &       \qw     &       \qw     &       \qw     &       \qw     &       \qw     & \qw      &       \qw     &       \qswap\qwx[11]  &       \qw     &       \qw     \\
\lstick{q_{2}}  &       \qw     &       \qw     &       \qw     &       \qw     &       \qw     &       \qw     &       \qw     &       \qw     &       \qw     &       \qw     &       \qw     &       \qw     & \qw      &       \qw     &       \qw     &       \qw     &       \qw     &       \qw     &       \qw     &       \qw     &       \qw     &       \qw     &       \qw     &       \qw     &       \qw     &       \qw&       \qw     &       \qw     &       \qswap\qwx[9]   &       \qw     &       \qw     &       \qw     &       \qw     &       \qw     &       \qw     &       \qw     &       \qw     &       \qw     &       \qw&       \qw     &       \qw     &       \qw     &       \qw     &       \qw     \\
\lstick{q_{3}}  &       \qw     &       \qw     &       \qw     &       \qw     &       \qw     &       \qw     &       \qw     &       \qw     &       \qw     &       \qw     &       \qw     &       \qw     & \qw      &       \qw     &       \qw     &       \qw     &       \qw     &       \qw     &       \qw     &       \qw     &       \qw     &       \qw     &       \qw     &       \qw     &       \qw     &       \qw&       \qw     &       \qw     &       \qw     &       \qswap\qwx[9]   &       \qw     &       \qw     &       \qw     &       \qw     &       \qw     &       \qw     &       \qw     &       \qw     &       \qw&       \qw     &       \qw     &       \qw     &       \qw     &       \qw     \\
\lstick{q_{4}}  &       \qw     &       \qw     &       \qw     &       \qw     &       \qw     &       \qw     &       \qw     &       \qw     &       \qw     &       \qw     &       \qw     &       \qw     & \qw      &       \qswap\qwx[5]   &       \qw     &       \qw     &       \qw     &       \qw     &       \qw     &       \qw     &       \qw     &       \qw     &       \qw     &       \qw     &       \qw     & \qw      &       \qw     &       \qw     &       \qw     &       \qw     &       \qw     &       \qw     &       \qw     &       \qw     &       \qw     &       \qw     &       \qw     &       \qw     &       \qw&       \qw     &       \qw     &       \qw     &       \qw     &       \qw     \\
\lstick{q_{5}}  &       \qw     &       \qw     &       \qw     &       \qw     &       \qw     &       \qw     &       \qw     &       \qw     &       \qw     &       \qw     &       \qw     &       \qw     & \qw      &       \qw     &       \qswap\qwx[5]   &       \qw     &       \qw     &       \qw     &       \qw     &       \qw     &       \qw     &       \qw     &       \qw     &       \qw     &       \qw     & \qw      &       \qw     &       \qw     &       \qw     &       \qw     &       \qw     &       \qw     &       \qw     &       \qw     &       \qw     &       \qw     &       \qw     &       \qw     &       \qw&       \qw     &       \qw     &       \qw     &       \qw     &       \qw     \\
\lstick{q_{6}}  &       \qw     &       \qw     &       \qw     &       \qw     &       \qw     &       \qw     &       \qw     &       \qw     &       \qw     &       \qw     &       \qw     &       \qw     & \qw      &       \qw     &       \qw     &       \qswap\qwx[5]   &       \qw     &       \qw     &       \qw     &       \qw     &       \qw     &       \qw     &       \qw     &       \qw     &       \qw     & \qw      &       \qw     &       \qw     &       \qw     &       \qw     &       \qw     &       \qw     &       \qw     &       \qw     &       \qw     &       \qw     &       \qw     &       \qw     &       \qw&       \qw     &       \qw     &       \qw     &       \qw     &       \qw     \\
\lstick{q_{7}}  &       \qw     &       \qw     &       \qw     &       \qw     &       \qw     &       \qw     &       \qw     &       \qw     &       \qw     &       \qw     &       \qw     &       \qw     & \qw      &       \qw     &       \qw     &       \qw     &       \qswap\qwx[5]   &       \qw     &       \qw     &       \qw     &       \qw     &       \qw     &       \qw     &       \qw     &       \qw     & \qw      &       \qw     &       \qw     &       \qw     &       \qw     &       \qw     &       \qw     &       \qw     &       \qw     &       \qw     &       \qw     &       \qw     &       \qw     &       \qw&       \qw     &       \qw     &       \qw     &       \qw     &       \qw     \\
\lstick{q_{8}}  &       \qw     &       \qswap\qwx[4]   &       \qw     &       \qw     &       \qw     &       \qw     &       \qw     &       \qw     &       \qw     &       \qw     &       \qw     &       \qw&       \qw     &       \qw     &       \qw     &       \qw     &       \qw     &       \qw     &       \qw     &       \qw     &       \qw     &       \qw     &       \qw     &       \qw     &       \qw     & \qw      &       \qw     &       \qw     &       \qw     &       \qw     &       \qw     &       \qw     &       \qw     &       \qw     &       \qw     &       \qw     &       \qw     &       \qw     &       \qw&       \qw     &       \qw     &       \qw     &       \qw     &       \qw     \\
\lstick{q_{9}}  &       \qw     &       \qw     &       \gate{H}        &       \ctrl{4}        &       \ctrl{3}        &       \qw     &       \qw     &       \qw     &       \qw     &       \qw     &       \qw&       \qw     &       \gate{H}        &       \qswap  &       \qw     &       \qw     &       \qw     &       \qw     &       \qw     &       \qw     &       \qw     &       \qw     &       \qw     &       \qw&       \qw     &       \qw     &       \qw     &       \qw     &       \qw     &       \qw     &       \gate{H}        &       \qw     &       \qw     &       \qw     &       \qw     &       \qw     &       \qw&       \qw     &       \ctrl{3}        &       \ctrl{4}        &       \gate{H}        &       \qw     &       \qw     &       \qw     \\
\lstick{q_{10}} &       \qw     &       \qw     &       \gate{H}        &       \qw     &       \qw     &       \ctrl{4}        &       \ctrl{3}        &       \ctrl{2}        &       \qw     &       \qw     & \qw      &       \qw     &       \gate{H}        &       \qw     &       \qswap  &       \qw     &       \qw     &       \qw     &       \qw     &       \qw     &       \qw     &       \qw     &       \qw     & \qw      &       \qw     &       \qw     &       \qw     &       \qw     &       \qw     &       \qw     &       \gate{H}        &       \qw     &       \qw     &       \qw     &       \qw     &       \ctrl{2}  &\ctrl{3}        &       \ctrl{4}        &       \qw     &       \qw     &       \gate{H}        &       \qw     &       \qw     &       \qw     \\
\lstick{q_{11}} &       \qw     &       \qw     &       \gate{H}        &       \qw     &       \qw     &       \qw     &       \qw     &       \qw     &       \ctrl{4}        &       \ctrl{3}        &       \ctrl{2}   &       \ctrl{1}        &       \gate{H}        &       \qw     &       \qw     &       \qswap  &       \qw     &       \qw     &       \qw     &       \qw     &       \multigate{1}{\mathrm{ SWAP }}     & \qw      &       \qw     &       \qw     &       \multigate{1}{\mathrm{ SWAP }}     &       \qw     &       \qw     &       \qw     &       \qswap  &       \qw     &       \gate{H}        &       \ctrl{1}        & \ctrl{2} &       \ctrl{3}        &       \ctrl{4}        &       \qw     &       \qw     &       \qw     &       \qw     &       \qw     &       \gate{H}        &       \qw     &       \qw     &       \qw     \\
\lstick{q_{12}} &       \multigate{3}{\mathrm{ U_{D_u} }}     &       \qswap  &       \qw     &       \qw     &       \targ   &       \qw     &       \qw     &       \targ   &       \qw     &       \qw     &       \qw&       \targ   &       \qw     &       \qw     &       \qw     &       \qw     &       \qswap  &       \qw     &       \qw     &       \multigate{3}{\mathrm{ U_{D_v} }}     &       \ghost{\mathrm{ SWAP }}    &       \multigate{3}{\mathrm{ U_{{D_u}^{-1}} }}        &       \qw     &       \multigate{3}{\mathrm{  U_{D_{v^{\downarrow}}}}}     &       \ghost{\mathrm{ SWAP }}    &       \multigate{3}{\mathrm{ U_{D_{u^{\downarrow}}^{-1}}}}    &       \qw     &       \qw     &       \qw     & \qswap   &       \qw     &       \targ   &       \qw     &       \qw     &       \qw     &       \targ   &       \qw     &       \qw     &       \targ   &       \qw     &       \qw     &       \qswap  &       \multigate{3}{\mathrm{ U_{D_u} }}        &       \qw     \\
\lstick{q_{13}} &       \ghost{\mathrm{ U_{D_u} }}    &       \qw     &       \qw     &       \targ   &       \ctrl{-1}       &       \qw     &       \targ   &       \ctrl{-1}       &       \qw     &       \qw     & \targ    &       \ctrl{-1}       &       \qw     &       \qw     &       \qw     &       \qw     &       \qw     &       \qw     &       \qw     &       \ghost{\mathrm{ U_{D_v} }}    &       \qw     &       \ghost{\mathrm{ U_{{D_u}^{-1}} }}       &       \qw     &       \ghost{\mathrm{ U_{D_{v^{\downarrow}}}}}    &       \qw     &       \ghost{\mathrm{ U_{D_{u^{\downarrow}}^{-1}} }}    &       \qw     &       \qw     &       \qw     &       \qw     &       \qw     &       \ctrl{-1} &\targ   &       \qw     &       \qw     &       \ctrl{-1}       &       \targ   &       \qw     &       \ctrl{-1}       &       \targ   &       \qw     &       \qw     &       \ghost{\mathrm{ U_{D_u} }}    &       \qw\\
\lstick{q_{14}} &       \ghost{\mathrm{ U_{D_u} }}    &       \qw     &       \qw     &       \qw     &       \qw     &       \targ   &       \ctrl{-1}       &       \ctrl{-1}       &       \qw     &       \targ   & \ctrl{-1}        &       \ctrl{-1}       &       \qw     &       \qw     &       \qw     &       \qw     &       \qw     &       \qw     &       \qw     &       \ghost{\mathrm{ U_{D_v} }}    &       \qw     &       \ghost{\mathrm{  U_{{D_u}^{-1}} }}       &       \qw     &       \ghost{\mathrm{ U_{D_{v^{\downarrow}}}}}    &       \qw     &       \ghost{\mathrm{ U_{D_{u^{\downarrow}}^{-1}} }}    &       \qw     &       \qw     &       \qw     &       \qw     &       \qw     &       \ctrl{-1}  &       \ctrl{-1}       &       \targ   &       \qw     &       \ctrl{-1}       &       \ctrl{-1}       &       \targ   &       \qw     &       \qw     &       \qw     &       \qw     &       \ghost{\mathrm{ U_{D_u} }}       &       \qw     \\
\lstick{q_{15}} &       \ghost{\mathrm{ U_{D_u} }}    &       \qw     &       \qw     &       \qw     &       \qw     &       \qw     &       \qw     &       \qw     &       \targ   &       \ctrl{-1}       &       \ctrl{-1}  &       \ctrl{-1}       &       \qw     &       \qw     &       \qw     &       \qw     &       \qw     &       \qw     &       \qw     &       \ghost{\mathrm{ U_{D_v} }}    &       \qw     &       \ghost{\mathrm{ U_{{D_u}^{-1}} }}       &       \qw     &       \ghost{\mathrm{ U_{D_{v^{\downarrow}}}}}    &       \qw     &       \ghost{\mathrm{ U_{D_{u^{\downarrow}}^{-1}} }}    &       \qw     &       \qw     &       \qw     &       \qw     &       \qw     &       \ctrl{-1} &\ctrl{-1}       &       \ctrl{-1}       &       \targ   &       \qw     &       \qw     &       \qw     &       \qw     &       \qw     &       \qw     &       \qw     &       \ghost{\mathrm{ U_{D_u} }}    &       \qw\\
\end{myqcircuit}
$}
\end{center}
\caption{Example with $n=3$}
\label{fig:BE-Sexample}
\end{figure}

\begin{theorem}\label{teo:FE}
Given a one-pair matrix $S\in \mathbb{R}^{N\times N}$,  $N=2^n,$  with generators $u, v$, 
$u_i\neq 0$ for $i=0, 1, \ldots, N-1$, the circuit in Figure~\ref{fig:BE-S} implements a $\left(\frac{2N^2M_u^2M_v}{c^4 m_u},\right.$ 
$\left.2\log(N)+7,\frac{2N^2M_u}{c^2}\left( \frac{M_u}{m_u}\epsilon_v +M_v \left (M_u + \frac{2}{c m_u} \right)\epsilon_u \right)\right)-$block encoding of $S$.   
The circuit has depth $\mathcal{O}({\rm{polylog}}(N)+ {\rm{poly}}(t))$ plus 12 calls to diagonal oracles, where 

$$t=\max\left\{ \left\lceil \log_2\left(\frac{M_v \pi}{c \epsilon_v}\right) \right\rceil , \left\lceil \log_2\left( \frac{\pi \, M_u}{\epsilon_u }\right) \right\rceil +\left\lceil \log_2\left( \frac{c\,M_u}{ m_u}\right) \right\rceil\right\}+1.
$$

\end{theorem}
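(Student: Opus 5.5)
The proof chains Lemma~\ref{lemma:prod} along the factorization \eqref{eq:mainfact}, $S=D_u L\Delta_z L^T D_u$, reading the circuit in Figure~\ref{fig:BE-S} from right to left. The ingredients are as follows. Theorem~\ref{diag_encoding} gives a $(M_u/c,1,\varepsilon_{D_u})$-block encoding of $D_u$, which is applied twice. Theorem~\ref{theo:Ltheorem} gives an exact $(N,n+1,0)$-block encoding of $L$; a block encoding of $L^T$ with the same parameters is obtained by running the circuit of Figure~\ref{fig:LCirc} in reverse, using that $L$ is real so $L^T=L^\dagger$. Theorem~\ref{lemma:Dz} gives a $(2M_v/(c^2m_u),3,\varepsilon_\Delta)$-block encoding of $\Delta_z$ with $\varepsilon_\Delta=2(\epsilon_v/m_u+M_v\epsilon_u)$.

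\textbf{Scaling factor.} Multiplying the scalings gives
\[
\frac{M_u}{c}\cdot N\cdot\frac{2M_v}{c^2m_u}\cdot N\cdot\frac{M_u}{c}=\frac{2N^2M_u^2M_v}{c^4m_u},
\]
which matches the statement.

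\textbf{Ancilla count.} Lemma~\ref{lemma:prod} would give $1+(n+1)+3+(n+1)+1=2n+7$ ancillas. This agrees with the claimed $2\log(N)+7$, which confirms that every factor keeps its own ancilla register. The swap gates in Figure~\ref{fig:BE-S} simply route the fresh $\ket{0}$ registers to the next factor, as in the three-factor example following Lemma~\ref{lemma:prod}. So correctness of the top-left block is a direct iterated application of that lemma.

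\textbf{Error.} I would apply Lemma~\ref{lemma:prod} four times, each time carrying the current $(\alpha,\cdot,\delta)$ triple. Since $L$ and $L^T$ are exact, the resulting error is the sum of three terms, each being the error of one inexact factor times the product of the other scalings:
\[
2\cdot\frac{M_u}{c}N^2\frac{2M_v}{c^2m_u}\,\varepsilon_{D_u}+\frac{M_u^2}{c^2}N^2\,\varepsilon_\Delta .
\]
The $\Delta_z$ term expands to $\frac{2N^2M_u^2}{c^2}\bigl(\epsilon_v/m_u+M_v\epsilon_u\bigr)$, which matches the $\frac{M_u}{m_u}\epsilon_v+M_vM_u\epsilon_u$ part of the claim. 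The two $D_u$ terms must then give $\frac{2N^2M_u}{c^2}\cdot\frac{2M_v}{cm_u}\epsilon_u$, which requires identifying $\varepsilon_{D_u}$ with $\epsilon_u$. This is the step needing care, and I expect it to be the main obstacle. The plan is to take the same precision $t$ for the oracle of $u$ in both uses, via Theorem~\ref{diag_encoding} and Theorem~\ref{teo:fast_inversion}. The definition of $t$ in the statement takes the maximum of the two requirements, so the $D_u$ encoding is accurate to at least $\epsilon_u$ and the bound follows (possibly after relabeling $\epsilon_u$). I would check the constants exactly against the definition of block-encoding error, keeping in mind that the paper's conventions for whether the error is scaled mix absolute and normalized errors, and adjust by the factor $c$ if needed.

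\textbf{Depth.} The two $U_L$ circuits have polylog depth by Theorem~\ref{theo:Ltheorem}. The swaps add $\mathcal{O}(n)$ depth. Each diagonal encoding has depth $\mathcal{O}({\rm poly}(t))$. Counting oracle calls, the two $D_u$ encodings use $2+2=4$ calls and $U_{\Delta_z}$ uses 8 by Theorem~\ref{lemma:Dz}, for 12 calls in total.
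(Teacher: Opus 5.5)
Your proposal is correct and follows the paper's proof: the paper also chains Lemma~\ref{lemma:prod} through $D_u$, $L$, $\Delta_z$, $L^T$, $D_u$, obtaining $L^T$ by running $U_L$ in reverse. Your closed-form error $\sum_j \delta_j\prod_{i\neq j}\alpha_i$ is exactly what its step-by-step computation gives, and you get the same $2n+7$ ancillas and $4+8=12$ oracle calls. The identification you flagged as the main obstacle is not argued separately in the paper either: it simply takes $U_{D_u}$ from Theorem~\ref{diag_encoding} to be an $(M_u/c,1,\epsilon_u)$-block encoding, reusing the same symbol $\epsilon_u$ for the $D_u$ and $D_u^{-1}$ encodings, as you propose.
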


\begin{proof}
The proof is obtained simply combining the block encodings of $D_u$, $L$, $\Delta_z$ and $L^T$, with the help of Lemma~\ref{lemma:prod}.     
Note that the block encoding of $L^T$ is obtained running the circuit for $U_L$ in reverse order, since $U_L^\dagger=U_L^T$ is a block encoding of $L^T.$ 
From Theorem~\ref{diag_encoding} we have that $U_{D_u}$ is an $(M_u/c, 1, \epsilon_u)$-block encoding of $D_u$, and $U_L$ is a $(N, \log(N)+1, 0)-$block encoding of $L$. This allows us to construct a $(N \, M_u/c, \log(N)+2, N\epsilon_u)-$ block encoding of $D_u L$. Multiplying by $U_{\Delta_z}$, we get a $(2 N M_u M_v/(m_uc^3), \log(N)+5, 2\frac{N}{c} \left( \frac{M_u}{m_u} \epsilon_v+ M_v( M_u+\frac{1}{c m_u}) \epsilon_u\right)-$ block encoding of $D_u L \Delta_z$. 
Multiplying by $U_L^\dagger$ we obtain a block encoding of $D_u L \Delta_z L^T$.

In this step we get a  
$$
\left( \frac{2N^2 M_u}{m_u}\frac{M_v}{c^3}, \ 2\log(N) + 6, \ 2\frac{N^2}{c} \left( \frac{M_u}{m_u} \epsilon_v+ M_v( M_u+\frac{1}{c m_u}) \epsilon_u\right) \right)
$$  
$-$block encoding.  
 The final multiplication is with $U_{D_u}$ and we obtain  a block encoding with scaling coefficient  equal to
$$2N^2 \frac{M_u^2}{m_u} \frac{M_v}{c^4}= \mathcal{O}(N^2).
$$
This block encoding uses $2 \log(N)+7$ ancillary qubits overall, and yields the block encoding of  matrix $S$ with an error 
\begin{equation} \label{eq:final_error}
\frac{2N^2M_u}{c^2}\left( \frac{M_u}{m_u}\epsilon_v +M_v \left (M_u + \frac{2}{c m_u} \right)\epsilon_u \right).
\end{equation}
The depth of the circuit is obtained combining Theorems~\ref{diag_encoding}, ~\ref{teo:fast_inversion}, ~\ref{theo:Ltheorem} and~\ref{lemma:Dz}. 

\end{proof}

Assuming $\epsilon=\epsilon_u=\epsilon_v$, we get that ~\eqref{eq:final_error} becomes
$$ 2\, \epsilon \,N^2  \frac{M_u}{c^2}\left( \frac{M_uc +M_v M_u c m_u + 2 M_v}{c m_u} \right)= \mathcal{O}( N^2 \epsilon).$$
As expected, the error also grows with the ratio $M_u/m_u$ and with $M_v$.

\subsection{Some considerations about the encoding}
\label{sec:considerations}
In the definition of block encoding, we state that it is only possible to encode a matrix scaled by a factor $\alpha \ge \|A\|_2.$
Using the factorization of a one-pair  matrix $S=D_u L \Delta_z L^T D_u$ and encoding each factor we note that we cannot expect to achieve better results than those reported in  Theorem~\ref{teo:FE}. Indeed we have
$$\|S\|_2\le \|D_u\|_2^2\|L\|_2^2\|\Delta_z\|_2.$$
Since $L^{-1}={\rm diag(ones}(N, 1), 1)-{\rm diag(ones}(N, 1), -1)$, we have that 
$$
\|L\|_2 \le \frac{2N}{ \pi}.
$$

Indeed, $\|L\|_2^2 =\frac{1}{\lambda_{min}(L^{-T}L^{-1})},$ and we have that $L^{-T}L^{-1}=2 I_N-T_{N, 0, 1},$ $T_{N, 0, 1}$ being the tridiagonal matrix with ones on the upper and lower diagonal and zeros on the main diagonal except for the entry $N, N$ which is 1. The eigenvalues of this special class of matrices can be explicitly computed~\cite{Lo92} so we have that $\lambda_{min}(L^{-T}L^{-1})=2+2\cos(2\pi N/(2N+1))\ge \frac{\pi^2}{4N^2}.$

Note that $\|D_u\|_2\le M_u,$ while $\|\Delta_z\|_2\le 2 \frac{M_v}{m_u}.$ Summing up, we get the following upper bound for the norm of $S$ as
$$
\|S\|_2\le \frac{8N^2 M_u^2 M_v}{\pi^2 m_u},
$$
which means that the scaling factor resulting in Theorem~\ref{teo:FE} cannot be much improved if this factorization is used. Indeed it is asymptotically optimal  both with respect to the size of the matrix and to the conditioning of vectors $u$ and $v$. 

The scaling term could potentially be reduced by a factor of $N$ if one can block encode a one-pair  matrix without relying on the factorization employed in this paper.
However we can prove that $\|S\|_2 =\mathcal{O}(N).$ Indeed, the elements of  $ S $ must satisfy $ |S_{ij}| \leq M_u M_v $ for all $ i, j $. Consequently, the $ L^1 $ norm of $ S $ is given by:

$$\|S\|_1 \leq M_u M_v N.$$

\begin{figure}
\begin{center}

\includegraphics[width=0.70\textwidth]{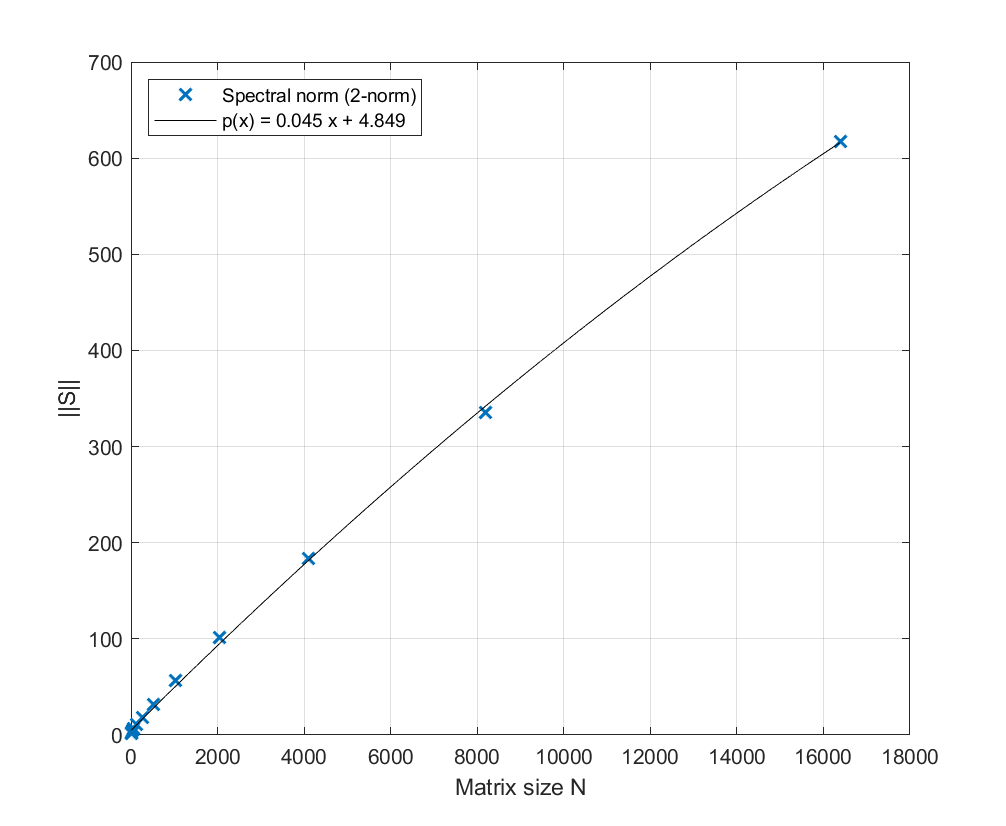}
\caption{Plot of the average $2$-norm for one-pair semiseparable matrices of different size. We clearly see the linear growth with respect to the size of the matrix. The solid line represents the polyfit curve for the 2-norm average values. The best fit polynomial is $p(x)=0.0450 x+  4.8493$.  }
\label{fig:plotnorm}
\end{center}
\end{figure}

Since $ S $ is symmetric, its $2$-norm coincides with its spectral radius. It is known that the spectral radius of a matrix is less than or equal to any induced norm. Therefore, we have:

$$\|S\|_2 \leq M_u M_v N.$$

On the other hand, the matrix $ \text{ones}(N) $ is semiseparable and has a $2$-norm exactly equal to $ N $ and $M_u=M_v=1$. Thus, the inequality derived above is sharp, and we can conclude that the $2$-norm of a generic one-pair  matrix with generators $ u $ and $ v $ grows as $ \mathcal{O}(M_u M_v N) $. 
Figure~\ref{fig:plotnorm} shows a plot of the $2$-norm of random one-pair semiseparable matrices of size ranging from  2 to $2^{14}$. The generators have been chosen randomly with distribution $\mathcal{N}(0, 1)$ and then normalized so that $\|u\|_{\infty}=\|v\|_{\infty}=1.$
Each point is the result of an average over 500 runs of one-pair semiseparable random matrices of the same size.

The scaling factor $\alpha= \mathcal{O}(N^2)$ introduces a multiplicative overhead in algorithms whose cost scales linearly with $\alpha$, such as Hamiltonian simulation and quantum linear solvers~\cite[Corollary~60]{gilyen2019quantum}. However, this overhead may be acceptable when circuit depth is the dominant bottleneck, or when only relative spectral information is required: for instance, when computing eigenvalues of the encoded matrix, the scaling factor can be removed by classical post-processing at no additional quantum cost.

\begin{figure}
\begin{center}
\includegraphics[width=0.65\textwidth]{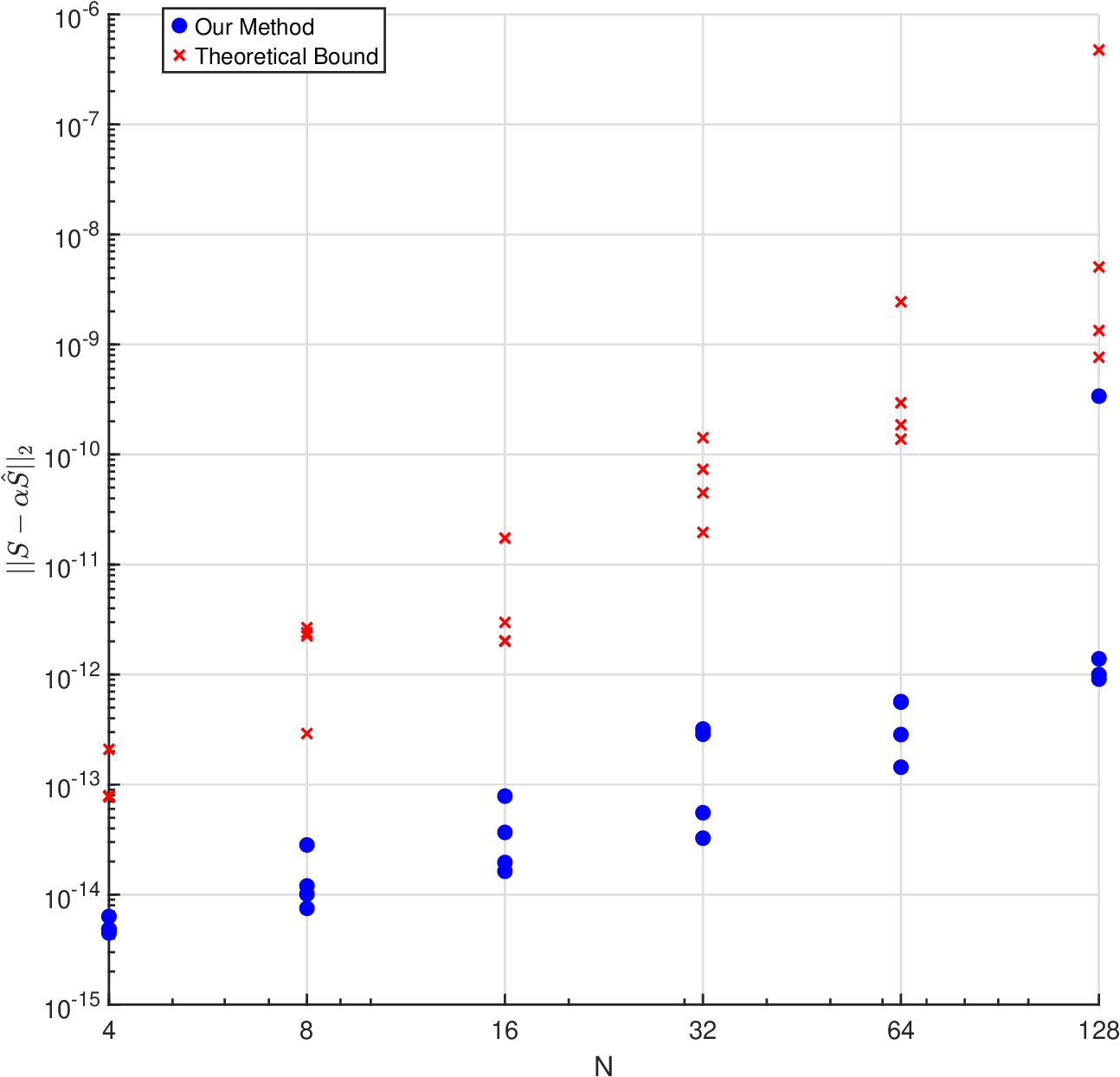}
\caption{Comparison between the theoretical and actual error for randomly generated matrices of size ranging from $4 \times 4$ to $128 \times 128$. The scaling factor $\alpha$ is as reported in Theorem~\ref{teo:FE}.}

\label{Fig:comp_theo}
\end{center}
\end{figure}
\section{Numerical experiments}\label{Sec:NE}
In this section, we report experiments comparing the performance of our encoding scheme against the theoretical error bound provided in Theorem~\ref{teo:FE}, as well as with the encoding obtained using FABLE~\cite{fable}, a general-purpose block-encoding algorithm.  All tests were performed using QCLAB~\cite{qclab}, an open-source MATLAB package for the simulation of quantum circuits.

Before reporting the results of our comparison, we need to make some observations: 
\begin{itemize}
\item To the best of our knowledge, there are no other works in the literature, except the present one, that specifically address the block-encoding of one-pair  matrices. For this reason we selected as reference benchmark a method for block-encoding general matrices, such as FABLE. 
\item FABLE directly encodes the matrix entries using 
controlled rotation gates, 
and  the full cost of loading all $N^2$ entries is accounted for in 
the circuit complexity, resulting in a circuit of depth $\mathcal{O}(N^2)$ 
without compression. Our method, on the contrary accesses the matrix entries using the diagonal oracles, and the circuit 
complexity beyond the oracle calls is $\mathcal{O}({\rm{polylog}}(N))$. 
A direct comparison with FABLE in terms of total gate count is only meaningful when the implementation cost of the oracle $O_D$ is taken into account. For general, unstructured input data (the generators $u$ and $v$), implementing $O_D$ requires a circuit of depth $\Omega(N)$,  which brings the total complexity to $\Omega(N)$, only a factor of $\mathcal{O}(N)$ better than FABLE but far from an exponential improvement. An exponential improvement in circuit complexity over FABLE is achievable only when we admit that the data are stored in QRAM or the matrix possesses sufficient structure to allow $O_D$ to be realized in $\mathcal{O}({\rm polylog}(N))$ gates. We discuss this situation  in Remark~\ref{parag:oracles}. In the absence of such conditions, 
the two methods address different data-access models 
and are not directly comparable in terms of circuit complexity.
\item Note however, that the error of the two compared methods originates from different sources. In particular the error of  our method derives from representation error of the generators: we assume to access the generators entries through the oracle with machine precision $\epsilon$, i.e., $\epsilon_v=\epsilon_u\approx 10^{-16}$. Then the errors combine additively and multiplicatively  from the operations applied to the generators to achieve the final error in block-encoding reported in Theorem~\ref{teo:FE}. On the other hand, in FABLE, the entries of the matrix are initially represented up to machine precision, and  the transformation to and from the Walsh-Hadamard space, as well as  the threshold value used as a cut-off parameter, introduce a compression error.

\end{itemize}

\begin{figure}
\begin{center}
\includegraphics[width=0.65\textwidth]{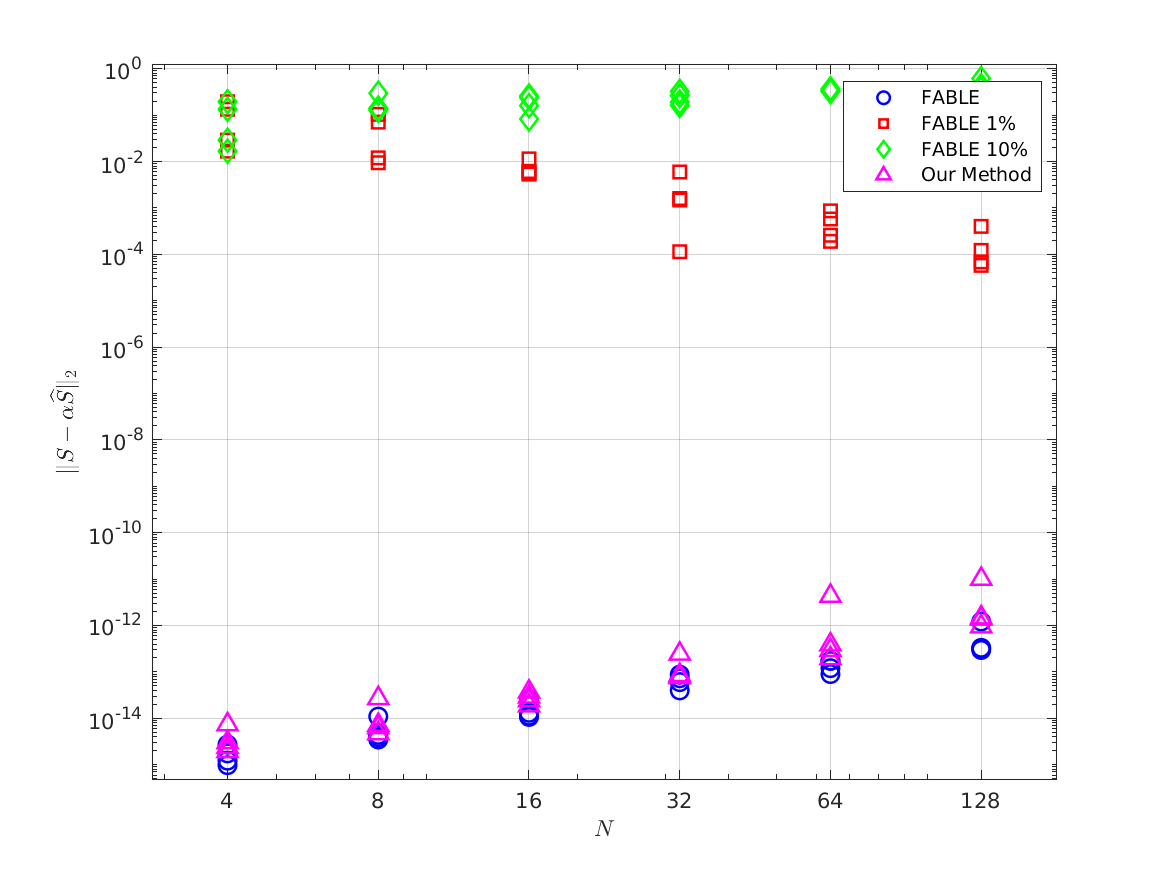}
\caption{Comparison of the encoding error obtained by our method and FABLE without compression (FABLE) and with different compression rates for randomly generated matrices of size ranging from $4\times 4$ to $128\times 128$.  The results reported as FABLE $1\%$ correspond to the removal of just one rotation, while FABLE $10\%$ removes $10\%$ of rotations. }
\label{Fig:comp_Fable}
\end{center}
\end{figure}

In Figure~\ref{Fig:comp_theo}, we compare the error in the block encoding of an $N \times N$ one-pair matrix against the theoretical bound derived for its block encoding and reported in Theorem~\ref{teo:FE}. For $N=4,8,16,32,64,128$ we generate three random one-pair matrices
whose generators are sampled from a normal distribution with mean 0 and variance 1. For each case, let $S$ be the original one-pair matrix, and denote by $\hat{S}$ the principal submatrix of the unitary obtained from the circuit rescaled according to the normalized parameter $\alpha$; then the computed error is defined as
$\|S-\alpha \hat{S}\|_2$.
 As shown in Figure~\ref{Fig:comp_theo}, the actual error of our encodings remains consistently below the theoretical bounds by several orders of magnitude.

In Figure~\ref{Fig:comp_Fable}, we compare the performance of our encoding scheme with FABLE under three different approximation settings. The first comparison considers the exact version of FABLE, where no compression is applied and the number of rotations required to encode the matrix scales as $\mathcal{O}(N^2)$. In this configuration, our method achieves a comparable level of accuracy while maintaining a complexity of $\mathcal{O}({\rm polylog}(N))$ plus two calls to the oracle $O_D$. We point out that our algorithm requires a slightly higher number of ancillary qubits, precisely $2\log(N)+7$ versus the $\log(N)+1$ required by FABLE, so the classical simulation of our circuit requires a longer computation time, but on a real quantum hardware with enough qubits the computation of our circuit 
can be exponentially faster in cases where the oracle $O_D$ is implementable with polylogarithmic depth.
Moreover, some of our ancillary qubits can be freed and can therefore be overlooked in the overall storage count.

Note that using a general-purpose encoding such as FABLE can be detrimental for structured matrices. The Walsh–Hadamard transform (WHT), used to increase data sparsity, together with the entrywise encoding and rounding, can break the low-rank generator structure of the original matrix. In Figure~\ref{Fig:error_inv}(a) we plot the spectral norm distance between the inverse of the original one-pair semiseparable matrix S and the scaled inverse of its block-encoded approximation, i.e. $\|S^{-1} - \hat S^{-1}/\alpha \|_2$. 
Even without compression, FABLE does not preserve the semiseparable structure: encoding the entries as ${\rm round}(u_iv_j)$ destroys the generator representation, and the error in the inverse is larger than the one computed by or method. Indeed, our method better preserves the structure because it is enforced by the factorization we use. 

The accuracy of the inverse is however sensitive to the conditioning of $S$. Using the standard matrix perturbation bound, if $\|S -\alpha \hat S\|_2 \le \delta$, then
$$\|S^{-1} - \frac{1}{\alpha}\hat S^{-1}\|_2\le \kappa(S)  \frac{\|S^{-1} \|}{\|S\|}\, \delta= \|S^{-1}\|^2 \,\delta.$$

This shows that the error in the inverse scales with the square of $\|S^{-1}\|_2$, and hence grows rapidly for ill-conditioned matrices. This explains the larger errors observed for our method at $N=128$, where the randomly generated test matrices have condition numbers ranging from $\mathcal{O}(10^3)$ to $\mathcal{O}(10^6)$.
 In Figure~\ref{Fig:error_inv}(b) we report the quantity $\frac{1}{\alpha}\|\hat S^{-1}-{\rm tridiag}(\hat{S}^{-1})\|_F$, where ${\rm tridiag}(A)$ denotes the matrix with entries $A_{i,j}$ for $|i-j|\le 1$
and zeros elsewhere. Since the inverse of a one-pair semiseparable matrix is tridiagonal, this metric measures how closely $\hat{S}$ (via its inverse) retains the one-pair semiseparable structure.  
\begin{figure}[htbp]
\centering
\begin{subfigure}[t]{0.49\textwidth}
    \centering
    \includegraphics[width=\textwidth]{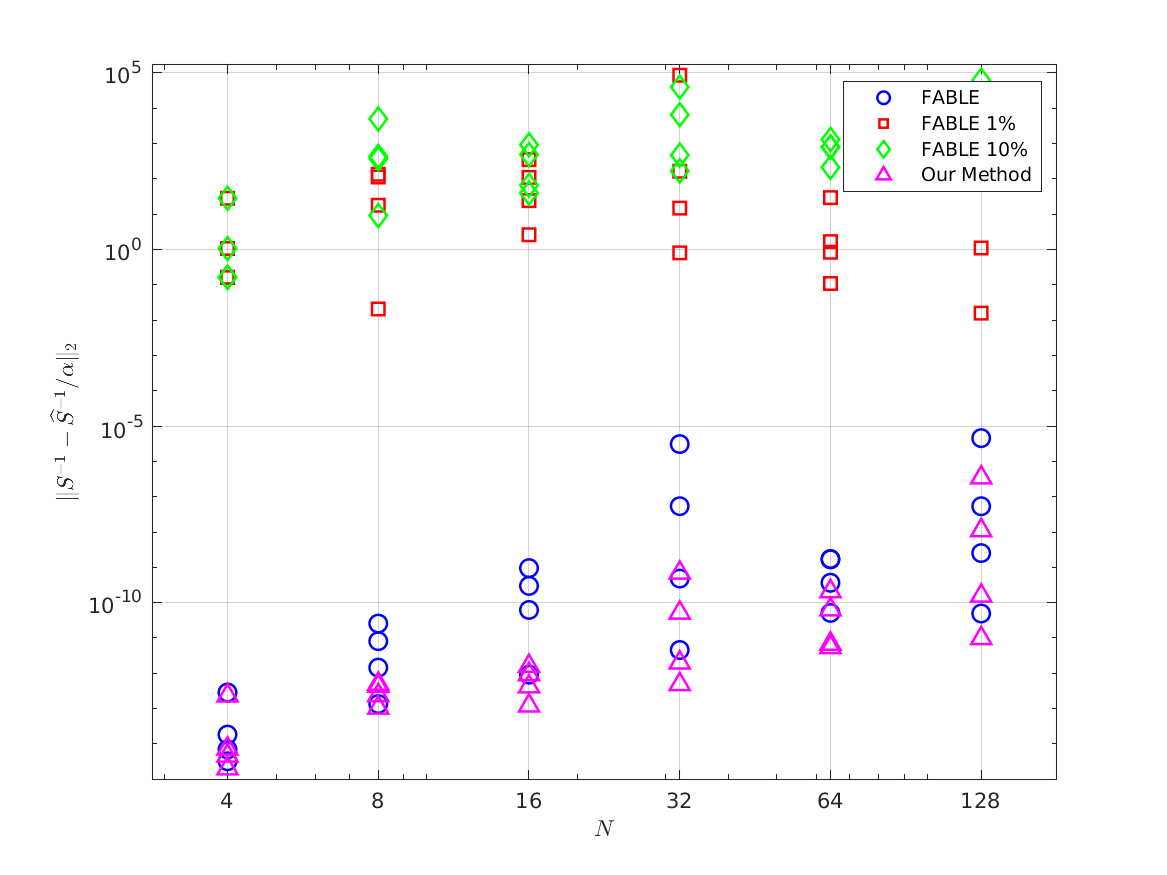}
    \caption{Error in the inverse of the scaled block encoded matrix, quantified as $\|S^{-1} - \hat{S}^{-1}/\alpha\|_2$.}
    \label{fig:plot3a}
\end{subfigure}
\hfill
\begin{subfigure}[t]{0.49\textwidth}
    \centering
    \includegraphics[width=\textwidth]{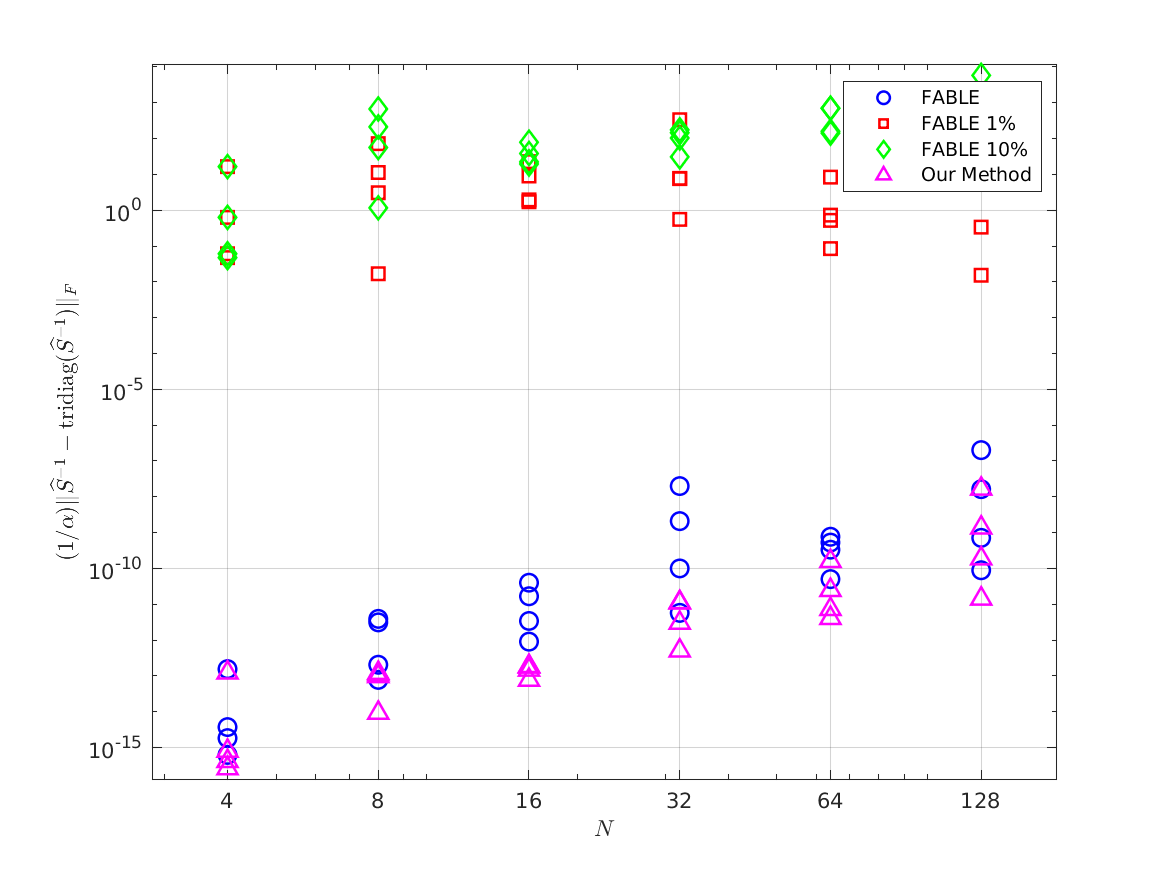}
    \caption{Error indicating the deviation of the computed inverse from the tridiagonal structure, expressed as $\frac{1}{\alpha}\|\hat{S}^{-1} - \text{tridiag}(\hat{S}^{-1})\|_F$. }
    \label{fig:plot3b}
\end{subfigure}
\caption{Comparison of the inverses of the block encoded matrices $\alpha\hat{S}$ with the ground truth $S$ and computation of the norm of the off-tridiagonal elements. Knowing that the inverse of a one-pair semiseparable is a tridiagonal matrix, we expect the norm of the off-tridiagonal elements to be null.}
\label{Fig:error_inv}
\end{figure}

Overall, although the exact FABLE encoding achieves a slightly lower representation error of the one-pair semiseparable matrix, our method better preserves its structural characteristics. In particular, despite exhibiting a marginally higher representation error, our encoding yields smaller errors in both the norm of the inverse and the norm of the off-tridiagonal elements. We note that, using the FABLE encoding scheme, even without compression, the one-pair semiseparable structure is lost. Specifically, FABLE is applied directly to the $N^2$ entries of the matrix, which
it treats as independent parameters. The WHT is a global, dense transformation: each output coefficient is a linear combination of all $N^2$ input values, with coefficients $\pm 1$. During this transformation some error is introduced along all the values, while in the proposed method the error of the result is always dependent on the original generators. This means that while the inversion of the final encoded matrix of our proposed method is affected by error deriving only from the generators, the inversion of the FABLE results is affected by errors over the whole matrix representation. This disrupts the low-rank properties of the matrix and affects the off-tridiagonal entries of the inverse.

 \section{Conclusions}   
Block-encoding is a fundamental task in quantum numerical linear algebra, as it provides an embedding of a given matrix into a unitary suitable for quantum computation. 

In this work we have explored the role of matrix rank structure -- namely, one-pair semiseparable structure -- in the design of an efficient block-encoding procedure. Relying on a specific factorization that characterizes one-pair semiseparable matrices, we propose a block-encoding method that requires polylogarithmic time, assuming that one-pair semiseparable generators can also be encoded in polylogarithmic time. 
 Experimental comparison with state-of-the-art general-purpose encoding software FABLE shows that the accuracy of our approach is  comparable to FABLE without compression and remarkably better than FABLE when carried out with several compression levels.  

 In classical numerical linear algebra, it is well-known that sparse and data-sparse structure allows for the design of linear complexity algorithms for computing matrix-vector products or solving linear systems. In a quantum setting, if an efficient input model is available, one may expect to achieve polylogarithmic complexity in block encoding of rank-structured matrices, which in turn may be exploited towards efficient quantum linear algebra algorithms. An important open problem concerns the subnormalization factor of our encoding, which scales as $\mathcal{O}(N^2)$
 due to the factorization approach employed. As shown in Section~\ref{sec:considerations}, the spectral norm of a generic one-pair matrix grows as $\mathcal{O}(N)$, suggesting that the current subnormalization factor is suboptimal by a factor of $\mathcal{O}(N)$. Closing this gap would require a block encoding strategy that does not rely on the triangular-diagonal factorization of Proposition~\ref{prop:decomoposition}, and represents a natural direction for future work. A tighter subnormalization factor would directly improve the performance of quantum algorithms applied to semiseparable matrices, such as Hamiltonian simulation and quantum linear solvers, since the  cost scales linearly with $\alpha$. 

The present work can be seen as a first step towards efficient block-encoding strategies for more complex matrix structures, such as hierarchical semiseparable~\cite{Xia10}.

\section*{Acknowledgments}

This study was carried out within the National Centre on HPC, Big Data and Quantum Computing - SPOKE 10 (Quantum Computing) and received funding from the European Union Next-GenerationEU - National Recovery and Resilience Plan (NRRP) – MISSION 4 COMPONENT 2, INVESTMENT N. 1.4 – CUP N. I53C22000690001. 

Partial support was also provided by the INdAM - GNCS project ``Strutture di matrici e di funzioni per la sintesi di circuiti quantistici efficienti'', CUP E53C23001670001.

P. Boito acknowledges the MIUR Excellence Department Project awarded to the Department of Mathematics, University of Pisa, CUP I57G22000700001.

P. Boito and G. Del Corso are also partially supported by European Union - NextGenerationEU under the National Recovery and Resilience Plan (PNRR) - Mission 4 Education and research - Component 2 From research to business - Investment 1.1 Notice Prin 2022 - DD N. 104 2/2/2022, titled Low-rank Structures and Numerical Methods in Matrix and Tensor Computations and their Application, proposal code 20227PCCKZ – CUP I53D23002280006.
J53D23003620006
P. Boito and G. Del Corso are also partially supported by  the U.S. Department of Energy, Office of Science, National Quantum Information Science Research Centers, Superconducting Quantum Materials and Systems Center (SQMS) under contract number DE-AC02-07CH11359.

The work of M. Porcelli is partially supported by INdAM - GNCS project CUP E53C24001950001.

\subsection*{Declarations}
\subsubsection*{Author contributions} GDC focused on the development of the quantum algorithm with contributions from GA, PB and MP.   PB and MP handled  the writing of the paper with contribution from GA and GDC.  GA was responsible for the development of the code and for the numerical testing.  All four authors carefully reviewed the manuscript and discussed the theoretical aspects as well as the approach to solving the problem.
\subsubsection*{Ethical Approval}
It complies with the Ethical Rules applicable for this journal.
\subsubsection*{Data Availability}
No datasets were generated or analyzed during the current study.
\subsubsection*{Code Availability} 
The code of the implemented algorithm is available at\\ \url{https://github.com/Giacomo-Antonioli/QBES}.
\subsubsection*{Competing interests}
The authors declare no competing interests.
\bibliographystyle{plain}
\bibliography{biblio}

@article{wan2021block,
  title={Block-encoding-based quantum algorithm for linear systems with displacement structures},
  author={Wan, Lin-Chun and Yu, Chao-Hua and Pan, Shi-Jie and Qin, Su-Juan and Gao, Fei and Wen, Qiao-Yan},
  journal={Physical Review A},
  volume={104},
  number={6},
  pages={062414},
  year={2021},
  publisher={APS}
}

@article{camps2024explicit,
  title={Explicit quantum circuits for block encodings of certain sparse matrices},
  author={Camps, Daan and Lin, Lin and Van Beeumen, Roel and Yang, Chao},
  journal={SIAM Journal on Matrix Analysis and Applications},
  volume={45},
  number={1},
  pages={801--827},
  year={2024},
  publisher={SIAM}
}

@inproceedings{fable,
  title={{FABLE}: Fast approximate quantum circuits for block-encodings},
  author={Camps, Daan and Van Beeumen, Roel},
  booktitle={2022 IEEE International Conference on Quantum Computing and Engineering (QCE)},
  pages={104--113},
  year={2022},
  organization={IEEE}
}

@book{KLM,
  title        = {An Introduction to Quantum Computing},
  author       = {Kaye, Phillip and Laflamme, Raymond and Mosca, Michele},
  publisher    = {Oxford University Press},
  year         = {2007},
  address      = {Oxford},
  isbn         = {0198570007},
  pages        = {288}
}

@article{takahira2021quantum,
  title={Quantum algorithms based on the block-encoding framework for matrix functions by contour integrals},
  author={Takahira, Souichi and Ohashi, Asuka and Sogabe, Tomohiro and Usuda, Tsuyoshi Sasaki},
  journal={arXiv preprint arXiv:2106.08076},
  year={2021}
}

@article{vandebril2005note,
  title={A note on the representation and definition of semiseparable matrices},
  author={Raf Vandebril  and Marc Van Barel and Nicola Mastronardi},
  journal={Numerical Linear Algebra with Applications},
  volume={12},
  number={8},
  pages={839--858},
  year={2005},
  publisher={Wiley Online Library}
}

@article{li2023efficient,
  title={On efficient quantum block encoding of pseudo-differential operators},
  author={Haoya Li  and  Hongkang Ni and  Lexing Ying},
  journal={Quantum},
  volume={7},
  pages={1031},
  year={2023},
  publisher={Verein zur F{\"o}rderung des Open Access Publizierens in den Quantenwissenschaften}
}

@misc{fasino,
  title={A brief introduction to quasiseparable matrices},
  author={Dario Fasino},
    year =2011,
  note={https://www.mat.uniroma2.it/$\sim$tvmsscho/Rome-Moscow\_School/2011/files/quasisep-fasino-notes.pdf}
}

@article{lin2022lecture,
  title={Lecture notes on quantum algorithms for scientific computation},
  author={Lin Lin},
  journal={arXiv preprint arXiv:2201.08309},
  year={2022}
}

@inproceedings{gilyen2019quantum,
  title={Quantum singular value transformation and beyond: exponential improvements for quantum matrix arithmetics},
  author={Andr{\'a}s Gily{\'e}n and Yuan Su  and  Guang Hao Low and  Nathan Wiebe},
  booktitle={Proceedings of the 51st Annual ACM SIGACT Symposium on Theory of Computing},
  pages={193--204},
  year={2019}
}

@article{harrow2009quantum,
  title={Quantum algorithm for linear systems of equations},
  author={ Aram W. Harrow and  Avinatan Hassidim and Seth Lloyd},
  journal={Physical review letters},
  volume={103},
  number={15},
  pages={150502},
  year={2009},
  publisher={APS}
}

@article{wossnig2018quantum,
  title={Quantum linear system algorithm for dense matrices},
  author={Leonard Wossnig and Zhikuan Zhao and Anupam Prakash},
  journal={Physical review letters},
  volume={120},
  number={5},
  pages={050502},
  year={2018},
  publisher={APS}
}

@book{gantmacher2002oscillation,
  title={Oscillation matrices and kernels and small vibrations of mechanical systems: revised edition},
  author={Gantmacher, Felix R and Krein, Mark G},
  year={2002},
  publisher={American Mathematical Society Rhode Island}
}

@book{vandebril2008matrix,
  title={Matrix computations and semiseparable matrices: linear systems},
  author={Vandebril, Raf and Van Barel, Marc and Mastronardi, Nicola},
  volume={1},
  year={2008},
  publisher={JHU Press}
}

@book{eidelman2014separable,
  title={Separable type representations of matrices and fast algorithms},
  author={Eidelman, Yuli and Gohberg, Israel and Haimovici, Iulian},
  year={2014},
  publisher={Springer}
}

@article{massei2020hm,
  title={{HM}-toolbox: Matlab software for {HODLR} and {HSS} matrices},
  author={Massei, Stefano and Robol, Leonardo and Kressner, Daniel},
  journal={SIAM Journal on Scientific Computing},
  volume={42},
  number={2},
  pages={C43--C68},
  year={2020},
  publisher={SIAM}
}

@book{bebendorf2008hierarchical,
  title={Hierarchical matrices},
booktitle={Lecture Notes in Computational Science and Engineering},
  author={Bebendorf, Mario},
  year={2008},
  publisher={Springer}
}

@article{fasino2002structural,
  title={Structural and computational properties of possibly singular semiseparable matrices},
  author={Fasino, Dario and Gemignani, Luca},
  journal={Linear Algebra and its Applications},
  volume={340},
  number={1-3},
  pages={183--198},
  year={2002},
  publisher={Elsevier}
}

@article{Tong2021,
  title = {Fast inversion, preconditioned quantum linear system solvers, fast Green's-function computation, and fast evaluation of matrix functions},
  author = {Tong, Yu and An, Dong and Wiebe, Nathan and Lin, Lin},
  journal = {Phys. Rev. A},
  volume = {104},
  issue = {3},
  pages = {032422},
  numpages = {33},
  year = {2021},
  month = {Sep},
  publisher = {American Physical Society},
  doi = {10.1103/PhysRevA.104.032422},
  url = {https://link.aps.org/doi/10.1103/PhysRevA.104.032422}
}

@book{AB09,
author = {Arora, Sanjeev and Barak, Boaz},
title = {Computational Complexity: A Modern Approach},
year = {2009},
isbn = {0521424267},
publisher = {Cambridge University Press},
address = {USA},
edition = {1st},
}

@article {Lo92,
    AUTHOR = {Losonczi, L.},
     TITLE = {Eigenvalues and eigenvectors of some tridiagonal matrices},
   JOURNAL = {Acta Math. Hungar.},
  FJOURNAL = {Acta Mathematica Hungarica},
    VOLUME = {60},
      YEAR = {1992},
    NUMBER = {3-4},
     PAGES = {309--322},
      ISSN = {0236-5294,1588-2632},
   MRCLASS = {15A57 (15A18)},
  MRNUMBER = {1177259},
MRREVIEWER = {Marc\ Artzrouni},
       DOI = {10.1007/BF00051649},
       URL = {https://doi.org/10.1007/BF00051649},
}

@article{Ch12,
author ={Andrew M. Childs and Nathan Wiebe},
title = {Hamiltonian Simulation Using Linear Combinations of Unitary Operations},
volume={12},
   ISSN={1533-7146},
   url={http://dx.doi.org/10.26421/QIC12.11-12},
   DOI={10.26421/qic12.11-12},
   number={11 \& 12},
   journal={Quantum Information and Computation},
   publisher={Rinton Press},
   year={2012},
   month=nov }

@ARTICLE{ChanGu06,
	author = {Chandrasekaran, S. and Gu, M. and Pals, T.},
	title = {A fast {ULV} decomposition solver for hierarchically semiseparable representations},
	year = {2006},
	journal = {SIAM Journal on Matrix Analysis and Applications},
	volume = {28},
	number = {3},
	pages = {603 – 622},
	
}

@misc{qclab,
      title={QCLAB: A Matlab Toolbox for Quantum Computing}, 
      author={Sophia Keip and Daan Camps and Roel Van Beeumen},
      year={2025},
      eprint={2503.03016},
      archivePrefix={arXiv},
      primaryClass={quant-ph},
      url={https://arxiv.org/abs/2503.03016}, 
}

@article{Xia10,
author = {Xia, Jianlin and Chandrasekaran, Shivkumar and Gu, Ming and Li, Xiaoye S.},
title = {Fast algorithms for hierarchically semiseparable matrices},
journal = {Numerical Linear Algebra with Applications},
volume = {17},
number = {6},
pages = {953-976},

year = {2010}
}

@article{brassard2000quantum,
  title={Quantum amplitude amplification and estimation},
  author={Brassard, Gilles and Hoyer, Peter and Mosca, Michele and Tapp, Alain},
  journal={arXiv preprint quant-ph/0005055},
  year={2000}
}

@article{giovannetti2008quantum,
  title={Quantum random access memory},
  author={Giovannetti, Vittorio and Lloyd, Seth and Maccone, Lorenzo},
  journal={Physical review letters},
  volume={100},
  number={16},
  pages={160501},
  year={2008},
  publisher={APS}
}

@inproceedings{kerenidis2017quantum,
  title={Quantum Recommendation Systems},
  author={Kerenidis, Iordanis and Prakash, Anupam},
  booktitle={8th Innovations in Theoretical Computer Science Conference (ITCS 2017)},
  pages={49--1},
  year={2017},
  organization={Schloss Dagstuhl--Leibniz-Zentrum f{\"u}r Informatik}
}

@book{nielsen2010quantum,
  title={Quantum computation and quantum information},
  author={Nielsen, Michael A and Chuang, Isaac L},
  year={2010},
  publisher={Cambridge university press}
}

@book{Muller2018,
  author    = {Muller, Jean-Michel and Brunie, Nicolas and de Dinechin, Florent and Jeannerod, Claude-Pierre and Joldes, Mioara and Lef{\`e}vre, Vincent and Melquiond, Guillaume and Revol, Nathalie and Torres, Serge},
  title     = {Handbook of Floating-Point Arithmetic},
  edition   = {2nd},
  publisher = {Birkh{\"a}user},
  year      = {2018},
  address   = {Cham, Switzerland},
  isbn      = {978-3-319-76525-9},
  doi       = {10.1007/978-3-319-76526-6}
}

@article{Brent1976,
  author  = {Brent, Richard P.},
  title   = {Fast Multiple-Precision Evaluation of Elementary Functions},
  journal = {Journal of the ACM},
  volume  = {23},
  number  = {2},
  pages   = {242--251},
  year    = {1976},
  month   = apr,
  doi     = {10.1145/321941.321944},
  publisher = {ACM New York, NY, USA}
}

@book{kitaev2002,
  author    = {Kitaev, Alexei Yu. and Shen, Alexander and Vyalyi, Mikhail N.},
  title     = {Classical and Quantum Computation},
  series    = {Graduate Studies in Mathematics},
  volume    = {47},
  publisher = {American Mathematical Society},
  address   = {Providence, RI},
  year      = {2002},
  pages     = {257},
  isbn      = {9780821832295}
}

\end{document}